\documentclass[a4paper,12pt]{extarticle}
\addtolength{\oddsidemargin}{-.5in}%
\addtolength{\evensidemargin}{-1in}%
\addtolength{\textwidth}{1in}%
\addtolength{\textheight}{1.7in}%
\addtolength{\topmargin}{-1in}%
\usepackage[pagebackref=false,linktoc=page,colorlinks=true,linkcolor=blue,citecolor=blue]{hyperref}

\usepackage{url}
\usepackage{bbm}
\usepackage{bm}
\usepackage{setspace}
\usepackage{epsfig}
\usepackage{psfrag}
\usepackage[retainorgcmds]{IEEEtrantools}
\usepackage{color}
\usepackage{authblk}
\usepackage{latexsym}
\usepackage{amsmath,amssymb,amsfonts,amsthm}
\usepackage{mathrsfs}
\usepackage{mathtools}
\usepackage{caption}
\usepackage{rotating}
\usepackage[linesnumbered,lined,boxed,commentsnumbered]{algorithm2e}
\usepackage[mathscr]{eucal}
\usepackage[T1]{fontenc}
\usepackage[utf8]{inputenc}
\usepackage{graphicx}
\usepackage[numbers, authoryear]{natbib}
\usepackage{pstool,psfrag}
\usepackage{enumitem}
\usepackage{multirow}
\usepackage[all]{xy}
\usepackage[section]{placeins}
\DeclareMathAlphabet\mathbfcal{OMS}{mdugm}{b}{n}
\DeclareMathAlphabet\EuScript{U}{eus}{m}{n}
\SetMathAlphabet\EuScript{bold}{U}{eus}{b}{n}
\newcommand{\pVar}{\textsf{var}}
\newcommand{\mmd}{\mathrm{d}}
\newcommand{\md}{\,\mmd}
\newcommand{\vphi}{\varphi}
\newcommand{\borel}{\ensuremath{\mathscr{B}}}
\newcommand{\XXX}{\ensuremath{\mathcal{X}}}


\newcommand{\TTT}{\ensuremath{\mathcal{T}}}
\newcommand{\lT}{\ensuremath{\lambda_{\TTT}}}
\newcommand{\lX}{\ensuremath{\lambda_{\XXX}}}
\newcommand{\LT}{\ensuremath{\Lambda_{\TTT}}}

\newcommand{\GOm}{\ensuremath{\Gamma_{\Omega}}}
\newcommand{\GOmdot}{\ensuremath{\dot{\Gamma}_{\Omega}}}
\newcommand{\noise}{\ensuremath{\dot w}}
\newcommand{\blind}{1}


\newtheorem{proposition}{Proposition}

\makeatletter
\newcommand{\cov}{\mbox{$\text{cov}$}}


\newcommand{\PP}{\ensuremath{\mathrm{\mathbf{P}}}}
\newcommand{\RR}{\ensuremath{\mathrm{\mathbb{R}}}}
\newcommand{\SSS}{\ensuremath{\mathrm{\mathbb{S}_1}}}
\newcommand{\dd}{\ensuremath{\mathrm{d}}}

\newcommand{\NN}{\ensuremath{\mathbb{N}}}
\newcommand{\ZZ}{\ensuremath{\mathbb{Z}}}

\newcommand{\MO}{\ensuremath{\mathcal{M}_{\Omega}}}
\newcommand{\MOT}{\ensuremath{\mathcal{M}_{\Omega,\TTT}}}
\newcommand{\MA}{\ensuremath{\mathcal{M}_{\Omega\times\Theta}}}
\newcommand{\MB}{\ensuremath{\mathcal{M}_{\Omega\times\Theta,\TTT}}}

\usepackage{lipsum}





\definecolor{light-gray}{gray}{0.80} 
\title{{Bayesian inference of grid cell firing patterns using Poisson
  point process models with latent oscillatory Gaussian random fields}}
\author[1]{Ioannis Papastathopoulos
  \thanks{\small i.papastathopoulos@ed.ac.uk}}
\author[3]{Graeme Auld
  \thanks{\small graemeross.a@chula.ac.th}}
\author[1]{Finn Lindgren
  \thanks{\small finn.lindgren@ed.ac.uk}}
\author[2]{Kl\'{a}ra Zs\'{o}fia Gerlei
  \thanks{\small klara.gerlei@ed.ac.uk}}
\author[2]{Matthew F. Nolan
  \thanks{\small matt.nolan@ed.ac.uk}}
\affil[1]{{\small School of Mathematics and Maxwell Institute,
    University of Edinburgh, Edinburgh, EH9 3FD, Scotland}}
\affil[2]{{\small Centre for Discovery Brain Sciences, University of
    Edinburgh, Edinburgh, EH8 9XD, Scotland}} \affil[3] {{\small
    Department of Mathematics and Computer Science, Chulalongkorn
    University, Bangkok, Thailand}}


 \date{}

\frenchspacing
\begin{document}
\def\spacingset#1{\renewcommand{\baselinestretch}%
  {#1}\small\normalsize} \spacingset{1}
\maketitle
\begin{abstract}
  Questions about information encoded by the brain demand statistical
  frameworks for inferring relationships between neural firing and
  features of the world.\ The landmark discovery of grid cells
  demonstrates that neurons can represent spatial information through
  regularly repeating firing fields.\ However, the influence of
  covariates may be masked in current statistical models of grid cell
  activity, which by employing approaches such as discretizing,
  aggregating and smoothing, are computationally inefficient and do
  not account for the continuous nature of the physical world.\ These
  limitations motivated us to develop likelihood-based procedures for
  modelling and estimating the firing activity of grid cells
  conditionally on biologically relevant covariates.\ Our approach
  models firing activity using Poisson point processes with latent
  Gaussian effects, which accommodate persistent inhomogeneous
  spatio-directional patterns and overdispersion.\ Inference is
  performed in a fully Bayesian manner, which allows quantification of
  uncertainty.\ Applying these methods to experimental data, we
  provide evidence for temporal and local head direction effects on
  grid firing.\ Our approaches offer a novel and principled framework
  for analysis of neural representations of space.
\end{abstract}

\noindent \textbf{Key-words:} Cox process, Gaussian field, grid
cells, head-directional effect, oscillation, temporal
modulation\newline
\noindent\textbf{AMS subject classifications:}
Primary: 60G60, \, Secondary: 62M40, 62-07
\spacingset{1.9} 
\section{Introduction}
\label{sec:intro}

\subsection{Motivation: Inferring signals represented by neural activity}
What information does the activity of neurons in the brain convey
about the world? Questions of this kind are typically addressed by
analysing recordings of neural activity, from which spike events are
extracted, binned and smoothed
\citep{dayanTheoreticalNeuroscienceComputational2005}. However, with
such approaches, the continuous nature of the physical world is
neglected. This may mask fine scale structure in neural activity and
impede assessment of the effect of co-variables. These problems could
in principle be addressed by adopting modelling approaches that work on
the finest scale possible. Here, we develop such approaches with the
goal of analysing the firing patterns of grid cells found in the
medial entorhinal cortex.

The medial entorhinal cortex is a critical brain region for spatial
cognition and episodic memory
\citep{eichenbaumFunctionalOrganizationMedial2008,
  tukkerMicrocircuitsSpatialCoding2022,
  gerleiDeepEntorhinalCortex2021}. Neurons in this area are of
particular interest because of strong associations between their
firing activity and variables related to position and movement through
space \citep{moserSpatialRepresentationHippocampal2017}. Neurons have
been identified that encode locations
\citep{hafting2005microstructure,
  solstadRepresentationGeometricBorders2008}, orientation
\citep{sargolini2006conjunctive} and speed of movement
\citep{kropffSpeedCellsMedial2015}. The main emphasis and focus of
this paper is on the statistical modelling of the firing activity of
grid cells \citep{hafting2005microstructure}.\ Grid cells are believed
to form a coordinate system that allows spatial navigation and
learning of maps of the world. The metric of this representation is
provided by spatial firing fields that tile environments in a periodic
hexagonal pattern.\ The left panel of Figure \ref{fig:lphpp} shows a
typical data set consisting of a trajectory of an animal (grey), that
is, a record of the path followed by a moving animal over a period of
time, and an associated point pattern (black dots) of the locations at
which action potentials of a grid cell are observed.\ The action
potentials are termed \textit{firing events}. More information about
how the data were collected can be found in Section \ref{sec:data}.

\begin{figure}[htbp!]
  \centering
  \includegraphics[scale=.6]{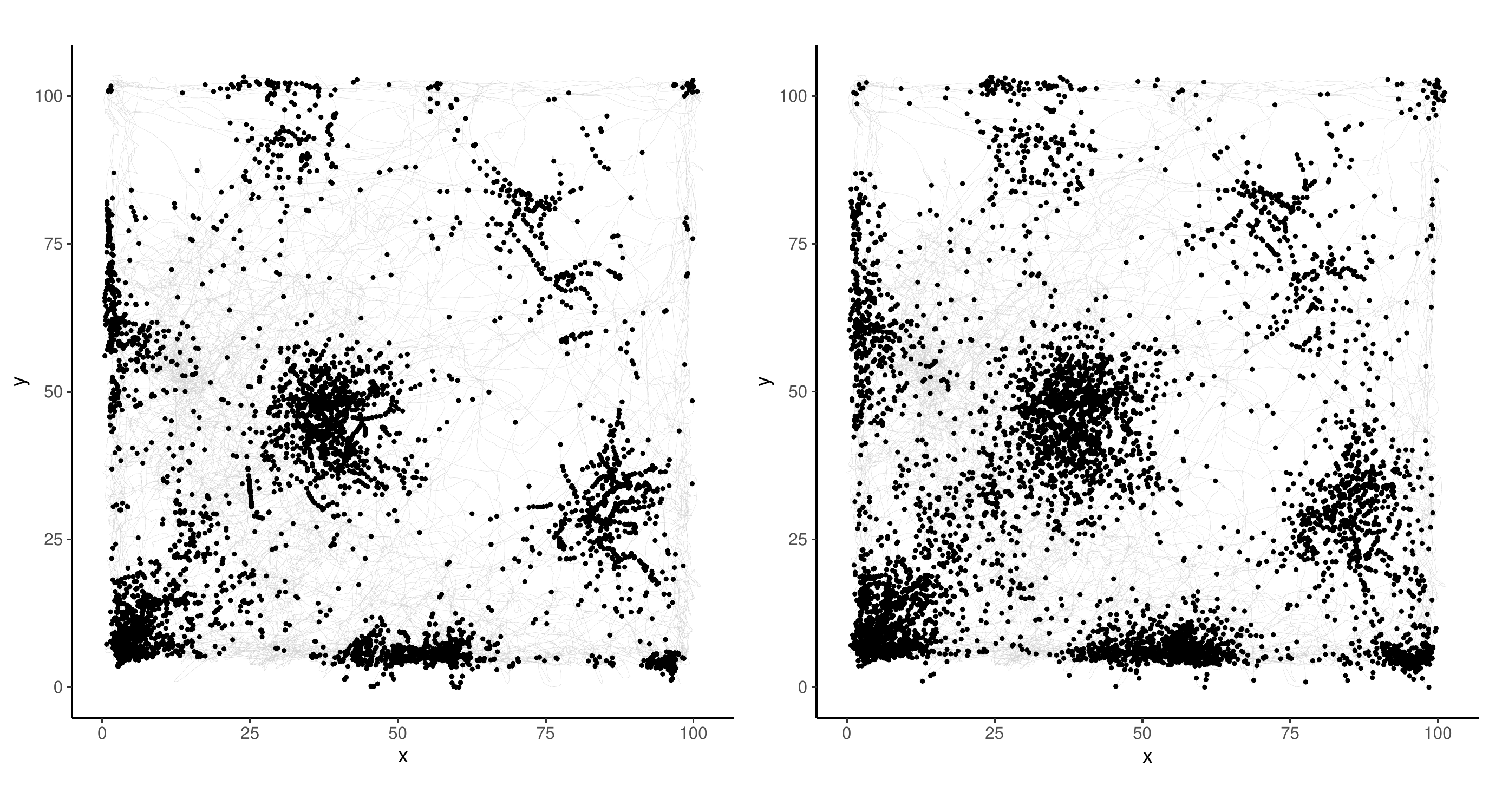}
  \caption{Left: True point pattern data (shown in black) on a trajectory $\GOm$ (shown
    in grey).\ Right: Simulated data on $\GOm$ from an inhomogeneous
    Poisson point process with estimated intensity function
    $\widehat{\lambda}_{\GOm}^0$ given by expression
    \eqref{eq:estimator_sargolini}.}
  \label{fig:lphpp}
\end{figure}

\subsection{Current approaches for inferring grid cell activity}
\label{sec:back}
Current approaches for inferring the intensity of grid firing events rely
mainly on kernel smoothing methods.\ Such methods are based on the \textit{firing rate map} which is an estimator of the rate
of occurrence of firing events per unit time at any spatial location
\citep{sargolini2006conjunctive}.\ The estimator is defined by
\begin{equation}
  \widehat{\lambda}_{\TTT}^{0}(s;h)  = \int_0^T K_h\left(\lVert \GOm(t)
    - s\lVert\right)\,N_\TTT(\dd t)\Big/ \int_{0}^T K_h\left(\lVert \GOm(t)-s
    \lVert\right)\,\dd t, ~s \in \mathbb{R}^2, T>0,  h>0,
  \label{eq:estimator_sargolini}
\end{equation}
where $\GOm\,:\,[0,T]\rightarrow \RR^2$ denotes a continuously
differentiable curve with $\GOm(t)=(s_x(t), s_y(t))$ representing the
location of the animal in $\RR^2$ at time $t\in \TTT:=[0,T]$ and $N_\TTT$
denotes the counting process of firing events over time.\ The kernel
function $K_h$ is defined by $K_h(s)=K(s/h)/h$, $s\in \RR^2$, where
$K$ denotes the probability density of the bivariate Gaussian
distribution with mean vector $(0,0)$ and variance-covariance matrix
equal to the identity matrix $I_2\in\RR^{2\times 2}$.\ The parameter
$h$ is a bandwidth parameter and $\lVert \,\cdot\, \lVert$ denotes the
Euclidean norm.\ In what follows, time is measured in seconds,
distances in centimeters and angles in radians.

To understand estimator~\eqref{eq:estimator_sargolini}, it is
helpful to decompose it into the product of two estimators, one
associated with the expected number of firing events per unit distance
travelled at location $s$, and the other associated with the expected
speed of the animal at location $s$. In particular, if we denote
\[
  W_h(t, s)=K_h(\lVert \GOm(t)-s \lVert)\Big/\int_{0}^T K_h(\lVert
  \GOm(t)-s \lVert) \, \dd t \qquad \text{and}\qquad
  \dot{\Gamma}_\Omega(t)=\left(\frac{\dd s_x(t)}{\dd t},
    \frac{ \dd s_y(t)}{\dd t}\right),
\] then we have that
$\widehat{\lambda}_{\TTT}^{0}(s;h)= \widehat{\lambda}_{\GOm}^0(s; h)
\, \widehat{v}(s;h)$, where
\begin{IEEEeqnarray*}{rCl}
  \widehat{\lambda}_{\GOm}^0(s; h) &=& \int_0^T K_h\left(\lVert
    \GOm(t) - s\lVert\right)\,N_\TTT(\dd t)\Big/ \int_{\GOm}
  K_h\left(\lVert z-s \lVert\right)\,\dd z,\\
  \widehat{v}(s;h) & = & \int_{\GOm} K_h\left(\lVert z-s
    \lVert\right)\,\dd z \Big/ \int_{0}^T K_h\left(\lVert \GOm(t)-s
    \lVert\right)\,\dd t = \int_{0}^T W_h\left(t,s\right)
  \lVert\dot{\Gamma}_\Omega(t)\rVert\,\dd t,
\end{IEEEeqnarray*}
with $\int_{\GOm}g(z)\,\dd z$ denoting the line integral of an
arbitrary integrable scalar field $g\,:\,\RR^2\rightarrow \RR$ along
the curve $\GOm$. Thus, $\widehat{\lambda}_{\TTT}^0$
decomposes into the product of a kernel-smoothing based estimator
$\widehat{\lambda}_{\GOm}^0$ for the intensity of a non-homogeneous
Poisson point process on the trajectory $\GOm$
\citep{diggle1985kernel} and a kernel-smoothing based estimator
$\widehat{v}(s;h)$ of the expected speed of the animal at a given
location $s$.\ This estimator is intrinsically
non-parametric, but it assumes that for any interval of time $I\subseteq\TTT$ with
$|I|>0$, the variation in the counts of firing events $N_\TTT(I)$ that
are observed during $I$ is solely explained by how much area the
animal explores and what regions it visits in this period of time.

Whilst seemingly simple, estimator \eqref{eq:estimator_sargolini} can
be unwieldy.\ First, kernel smoothing methods may mask fine scale
structure in the data and may produce overly smooth estimates.\
Although there are data-driven methods for choosing the bandwidth
parameter, fixing it to $h=3\text{cm}$
\citep[\textit{cf.}][supplementary material]{sargolini2006conjunctive}
appears to be common practice.\ Contrasting a realization of an
inhomogeneous Poisson point process on $\GOm$, with intensity function
equal to $\widehat{\lambda}_{\GOm}^0(, 3)$ (Figure~\ref{fig:lphpp},
right panel), with the ground-truth data (left panel), we see an
effect of oversmoothing, e.g., in the bottom left region of the square
arena.\ The ground-truth data exhibit an additional amount of
clustering which is also not captured by the inhomogeneous Poisson
point process.\ This questions the validity of the estimator and of
the generic use of a fixed bandwidth parameter.

To the best of our knowledge, no systematic study on the
smoothness properties of the firing activity of grid-cells exists.\ Although estimator \eqref{eq:estimator_sargolini} is
continuously specified, it is typically computed by
discretizing the spatial domain and then by using aggregated
statistics such as counts of firing events.\ The counts are
then used to estimate the firing activity. \ Although this simplifies statistical modelling, it comes at a
cost as it reduces estimation efficiency  and may lead to
 ecological bias, whereby inferences obtained at the coarser scale
are incorrectly carried over to the finer scale.

Second, the assumption that the variation in the number of counts
$N_\TTT(I)$ is explained only by the animal's locations in the period of time $I$ may not be
sufficient.\ For example, the firing activity may also be explained by
covariates other than the location of the animal such as the orientation of its head, which we will refer to as the
head-direction.\ By discretizing the domain of the
covariates and by computing counts within each part of the domain,
\cite{gerlei2020grid} gave evidence for an interaction effect between
the location and the head-direction, with variations in
the firing activity explained by combinations of locations and
head-directions.\ Again, the concern of potential ecological bias due to
discretization and aggregation effects also extends to the approach
adopted by \cite{gerlei2020grid}.\

In this paper we are led to develop new statistical procedures for
modelling the firing activity of grid cells that are flexible,
extendible, and coherent at the finest possible scale.\ This requires
a step-change in approach relative to current methodology and
practice.\ In particular, we adopt a principled statistical approach
based on log-Gaussian Cox processes.\ Our approach bypasses issues
related to ecological bias as it avoids discretization and
aggregation.  Additionally, it allows inclusion of covariates in a
straightforward manner.\ We build a range of statistical models which
include biologically relevant covariates. We show better predictive
performance from a model that includes an effect of interaction
between the location and the head-direction of the animal, thereby
confirming the results of \cite{gerlei2020grid}.\ Our proposed
statistical models are hierarchical and are built on the assumption
that the intensity function is random, and its prior distribution is
determined by that of a stationary Gaussian process.\ Owing to the
strong oscillatory nature of the firing activity, we use Gaussian
processes with underdamped Mat\'{e}rn covariance functions
\citep{lindetal11}.\ The upshot of this choice is that our prior
distributions mimic closely the effect of the covariates in the firing
activity, and the fitting of the models can be performed efficiently
due to the compactly supported basis function representation of our
proposed Gaussian random fields.\ As a by-product, we give closed-form
expressions for the marginal variance of Gaussian processes on $\RR$,
$\RR^2$, and $\SSS=[0,2\pi)$, with underdamped, critically damped, and
overdamped Mat\'{e}rn covariances.\ We build our models in a Bayesian
framework, which gives us a tool for fitting complex models, and the
advantage of doing this efficiently with integrated nested Laplace
approximation \citep[INLA, ][]{rue2009approximate}.\ We use the
\texttt{R} package \texttt{inlabru} \citep{inlabru}, which is based on
INLA and is specialized in dealing with structured data and has
extensive support for point process models.

\subsection{Paper structure}
In Section \ref{sec:statmodels} we describe the statistical framework
that use throughout the paper.\ Preliminary notation associated with
the experimental framework is presented Section
\ref{sec:notation_framework}.\ Section \ref{sec:pp} introduces the
class of log-Gaussian Cox processes and Section~\ref{sec:pp_models}
describes a general framework for including covariates in the
intensity function of a Poisson point process.\ Section
\ref{sec:prior_models_fields} discusses the finite-dimensional
Gaussian Markov random field (GMRF) approximations of the latent
Gaussian random fields that we use to model the effect of covariates
on the the firing rate of grid cells.\ In Section
\ref{sec:cont_random_fields} we describe the properties of the
continuous limit Gaussian random fields that the GMRFs presented in
Section \ref{sec:prior_models_fields} converge to, and derive
closed-form expressions for the variances, which are necessary for
practical model fitting and the interpretation of the models.\ Section
\ref{sec:computation} details practical information about how we
perform inference in practice.\ In particular, in Section
\ref{sec:integrals} we present approximate closed-form expressions for
the likelihood function and describe the numerical integration scheme
that we use to evaluate the intractable integral associated with the
void probability of each Poisson point process model.\ In Section
\ref{sec:prior}, we complete the specification of the Bayesian models
through choices of appropriate prior distributions for the
hyperparameters.\ Section \ref{sec:cs} contains our implementation,
analysis and interpretation of the proposed models through a case
study example from the grid cell spike train shown in Figure
\ref{fig:lphpp}.\ The data set is described in Section \ref{sec:data}.\
In Section \ref{sec:cv}, we assess the predictive performance of all
models via cross-validation and in \ref{sec:analysis} we fit and
interpret all models to the data. We conclude with a discussion that
highlights the strengths of the methodology in Section
\ref{sec:discussion}. Technical material and additional supporting
evidence is presented in \ref{sec:supplement} (Supplementary
Material).

\section{Statistical modelling}
\label{sec:statmodels}
\subsection{Notation and framework}
\label{sec:notation_framework}
The standard experimental framework relates to an animal; typically a
rodent moving in the interior of a bounded planar domain
$\Omega \subset \mathbb{R}^2$; and is modelled by a filtered
probability space $(E, \mathscr{H}, \PP, \mathscr{F})$ where
$\mathscr{F}=(\mathscr{F}_t)_{t \in \RR_+}$ denotes an increasing
family of sub-$\sigma$-algebras of $\mathscr{H}$.\ The animal moves
freely for a period of time $\TTT=[0,T]$, where
$T:E \rightarrow \RR_+$ denotes a stopping time, i.e.,
$\{T \leq t\} \in \mathscr{F}_t$ for all $t>0$.\ The stopping time $T$
marks the end of the experiment and in practice, it is usually decided
prior to the start of the experiment.\ Hence, we assume $T$ is
non-informative.\ For any $t\in \TTT$, let the two-dimensional vector
$\GOm(t)=(s_x(t), s_y(t))^\top\in\RR^2$ denote the spatial
co-ordinates of the animal in $\Omega$ at time $t$,
$\GOm(A) = \{\GOm(t)\,:\,t\in A \}$, $A\subseteq \TTT$, and
$\GOm:=\GOm(\TTT)$ for the continuously differentiable curve in
$\Omega$ representing the entire trajectory of spatial locations of
the animal throughout the duration of the experiment.\ Let
$\{s_0:=\GOm(\tau_0),\ldots,s_N:=\GOm(\tau_N)\}$ be a sample of
locations taken from $\GOm$ at approximately regular times
$\tau_0, \tau_1\ldots, \tau_N \in \TTT$, i.e.,
$\tau_{i+1}-\tau_{i}\approx \Delta \tau >0$.\ Let also
$\theta(t)\in \SSS:=[0,2\pi)$ denote the head-direction of the animal
at time $t$ and write
$\{\theta_0:=\theta(\tau_0),\dots, \theta_N:=\theta(\tau_N)\}$ for the
sample of head-directions.\ Prior to time $\tau_0=0$, electrodes that
measure the firing activity of a nerve cell are implanted in the brain
of the animal.\ Let $f:E \rightarrow \{0,1\}^{\mathcal{T}}$ be a
random variable from the sample space $E$ to the set of all functions
on $\mathcal{T}$ with image $\{0,1\}$.\ For a fixed realization of the
experiment the function $f$ encodes the activity of the neuron
according to the convention: $f(t)=1$, if the neuron is active at time
$t$ and $f(t)=0$, otherwise. In what follows, it is assumed that
available are the labeled data
$\{(s_1, \theta_1, f(\tau_1)), \ldots, (s_N, \theta_N, f(\tau_N))\}$.\
We write $t_1,\dots,t_n$ for the set of times the neuron is active,
that is, $\{t_i\,:\,i=1,\dots,n\}= \{\tau_j\,:\,
f(\tau_j)=1\}$. 
\subsection{Cox processes}
\label{sec:pp}
Let
$N_\TTT\,:\, E \times \mathscr{H} \rightarrow (\RR_+, \borel(\RR_+))$
be a random counting measure on $(\RR_+, \borel(\RR_+))$, that is,
$e\mapsto N_\TTT(e, A)$ is a random variable for each
$A\in \borel(\RR_+)$ and $A\mapsto N_\TTT(e, A)$ is, for almost every
$e\in E$, a purely atomic measure and its every atom has weight one.\
We shall denote by $N_\TTT(A)$ the former random variable and we will
associate with it the number of firing events in a Borel subset $A$ of
time.\ Let also
$\LT\,:\,E \times \mathscr{H} \rightarrow (\RR_+, \borel(\RR_+))$ be a
random measure on $(\RR_+, \borel(\RR_+))$ satisfying,
$\LT(A) = \int_A \lT(t)\,\dd t$ for any $A\in \borel(\RR_+)$, where
$\lT = (\lT(t))_{t\in \RR_+}$ is a positive continuous stochastic
process.\ In what follows, we say that the counting process $N_\TTT$
follows a Cox process with random intensity $\lambda_\TTT$ if the
conditional expectation of $\exp(-N_\TTT f)$ given the sigma-algebra
$\mathscr{F}$ generated by $\Lambda_\TTT$ is
\[
  \mathbb{E} \left\{\exp(-N_\TTT f)\mid {\mathscr{F}}\right\} =
  \exp\left(-\int_{\RR_+}[1-\exp\{-f(t)\}]\,\lambda_\TTT(t)\,\dd
    t\right), \qquad f \in \borel(\RR_+), \quad f\geq 0,
\]
that is, conditionally on the intensity function $\lambda_\TTT$, the
process $N_\TTT$ is an inhomogeneous Poisson point process with
intensity $\lambda_\TTT$ on $\TTT$, meaning that for any countable
collection of disjoint subsets
$A_1,\ldots, A_k \subseteq \borel(\RR_+)$, we have
\begin{IEEEeqnarray*}{l}
  N_\TTT(A_i)\mid \lT \sim \text{Poisson}\left(\int_{A_i} \lT(t)\,\dd
    t\right)~ \text{and}~ \text{$N_\TTT(A_1),\ldots,N_\TTT(A_k)$
    are independent given $\lT$}.
\end{IEEEeqnarray*}
Cox processes have been applied in various fields such as ecology,
epidemiology, neuroscience, and image analysis.\ This is because Cox
processes can model point patterns that exhibit both clustering and
inhomogeneity,
which is a trait observed in many natural phenomena.\ 
Another reason for the widespread use of Cox processes is their
flexibility and capacity for statistical modelling.\ The requirements
that are needed on the properties of $\lT$ so that $N_\TTT$ exists are
minimal and this accommodates a broad class of models for the random
intensity function $\lT$.\ Amongst the most popular ones, the
log-Gaussian Cox process (LGCP) models the logarithm of the random
intensity of the point process $N_\TTT$ as a Gaussian process, which
allows for flexible modelling of the variability of the point pattern,
and can model the effect of covariates in a non-linear manner.\ LGCPs
are also useful for Bayesian inference, as the Gaussian process prior
facilitates efficient numerical algorithms.


Given the $\sigma$-field $\mathscr{F}$ generated by $\LT$ and a point
pattern $\{t_1, \ldots, t_n\}$ in a bounded set $\TTT\subset \RR_+$,
the likelihood of an inhomogeneous Poisson process is 
%
\begin{equation}
  L(\lT) = \exp\left[\int_\TTT \{1-\lT(t)\}\, \dd t\right] \, \prod_{i=1}^n \lT(t_i).
  \label{eq:likelihood}
\end{equation}
Because likelihoods are needed up to a proportionality factor,
expression~\eqref{eq:likelihood} is often presented without the term
``1'' that appears in the exponent in the first factor.\ Such a
simplification in the likelihood function~\eqref{eq:likelihood} is
valid provided the domain of the point process remains
unchanged. However, when one is interested in modelling and inferring
the distribution of $N_\TTT$ subject to changes in its domain, then
\eqref{eq:likelihood} cannot be simplified further.\

\subsection{Statistical models for spike trains from grid cells}
\label{sec:pp_models}
To facilitate generality, we define the configuration space of the
system to be the set of allowable positions relative to a reference
state.\ For example, consider an animal whose body is moving in a two
dimensional Euclidean space and whose orientation in space is changing
according to a direction in the unit circle.\ Then we may take the
configuration space to be the Riemannian manifold
$\RR^2 \times \mathbb{S}_1$.\ More generally, we can label the
configuration of the animal by a system of generalized coordinates
$x=(x_1,\dots, x_m)$, where
$x_j \in \XXX_j\subseteq \RR^{d_j}, j = 1,\ldots,m$, $d_j, m\in \NN$,
so that for the aforementioned example, we have $x_1=(s_x, s_y)$ and
$x_2=\theta$ with $(s_x,s_y)$ denoting the row vector of Cartesian
coordinates in $\XXX_1 = \RR^2$ and $\theta$ denoting an angle in
$\XXX_2=\mathbb{S}_1=[0,2\pi)$.\ In the course of an experiment, the
animal undergoes dynamical evolution so it continuously changes its
position. This evolution can be specified by giving the generalized
coordinate location of the animal as a function of time, that is,
$(x_1,\dots,x_m)=(x_1(t),\dots,x_m(t))$ for $t\in\TTT$.\ This defines
a curve of covariates $\Gamma = \{\Gamma(t)\,:\,t\in\TTT\}$ where
$\Gamma(t)=(x_1(t),\dots,x_m(t))$ which we assume to be continuously
differentiable.\ It is worth to remark here that prior to the start of
the experiment, $\Gamma$ is unknown. Hence, $\Gamma$ may be taken to
be a random element, that is, a map that is measurable relative to
$\mathscr{H}$ and $\borel(\XXX)$, where
$\XXX := \prod_{j=1}^m \XXX_j$.\ In a Bayesian setting, this
assumption is equivalent to treating $\Gamma$ as a statistical
parameter in a hierarchical statistical model.\ Such an approach would
allow inference to be made about the dynamics of the trajectory of the
animal and potentially, about its effects on the intensity of the
point pattern, via the assignment of a prior distribution on $\Gamma$
and via posterior updating of prior beliefs.\ However, owing to the
fact that there is no evidence for an effect from the dynamics of
$\Gamma$ on the firing properties of grid-cells, which is the main
application and emphasis of this paper, we shall henceforth treat the
trajectory as known and fixed a priori, and we will construct
statistical models conditionally on the curve $\Gamma$.

To model the effect of the covariates on the number of spikes per unit
time at time $t$, we need to link the random mean measure $\LT$ to a
random measure on $\XXX$.\ This is accomplished by specifying
$\Lambda_\TTT$ in terms of a mapping $l$ that is measurable relative
to the join sigma-algebra $\borel(\XXX)\vee\borel(\TTT)$ and the
sigma-algebra $\borel(\RR_+)$, that is,
$\LT(\dd t) = l(\dd \Gamma(t), \dd t)$.\ To define the distance near
each point of the smooth manifold $\XXX$, we endow $\XXX$ with a
Riemannian metric that specifies infinitesimal distances on $\XXX$.\ A
Riemannian metric assigns to each $x$ a positive-definite inner
product $g_x\,:\,T_x\,\XXX\times T_x\,\XXX \rightarrow \RR$ where
$T_x\XXX$ is the tangent space at $x$. 
Given a system of generalized coordinates
$x\,:\,U\rightarrow \RR^{\sum_j d_j}$ on an open set $U$ of $\XXX$,
the vectors
$(\partial/\partial x_1\mid_x,\dots,\partial/\partial x_m\mid_x)$ form
a basis of the vector space $T_x\XXX$ for any $x\in U$. Relative to
this basis, we define the metric components of a metric tensor
$G=(g_{ij\mid x})_{1\leq i,j\leq m}$ at each $x$ by
$g_{ij\mid x} = g_x(\partial/\partial x_i \mid_x, \partial/\partial
x_j \mid_x)$.\ With this notation, our general modelling framework
presupposes that the expected number of spikes per unit time at time
$t$ equals to the expected number of spikes per unit generalized
travel distance at generalized coordinate $x(t)$, multiplied by the
generalized speed of the particle at time $t$.\ This is compactly
written as
\begin{equation}
  \Lambda_\TTT(\dd t) = \lambda_\TTT(t)\,\dd t = \lVert \dot{x}(t)\rVert \,\lambda_\XXX(x(t)),
  \label{eq:general_framework}
\end{equation}
where
$\lVert \dot{x}(t)\rVert^2 = \{(\dd x(t)/\dd t)^\top \, G\, (\dd
x(t)/\dd t)\}$ and $\lVert \dot{x}(t)\rVert$ is the local derivative
of the geodesic metric induced by the Riemannian metric $G$ along the
curve $\Gamma$ \citep{adler2007random}.\ Here $\log \lambda_\XXX$ is a
real-valued stochastic process on $\XXX$.

A natural choice for the metric tensor $G$ in our setting is
$G=\text{diag}(1,1,\delta)$ where $\delta$ is measured in meters per
radians, that is, the metric tensor only involves an axis aligned
scaling in the direction of $\theta$, which gives
$\rVert \dot{x}(t) \lVert^2= \dot{s_x}(t)^2 + \dot{s_y}(t)^2 + \delta
\, \dot{\theta}(t)^2$. The parameter $\delta$ is an unknown parameter
and in our implementations, we selected a prior distribution
degenerate at $0$, which implies that firing events cannot occur due
changes in the head-direction of the animal whilst the animal stays
still; this choice was motivated by subject expert knowledge and for
the simplicity it affords.\ Thus, in our implementation we use
$\lVert\dot{x}(t)\rVert = \lVert\dot{\Gamma}_\Omega(t)\rVert$ which is
the speed of the animal.\ However, it is worth to remark that, should
there be need for the use of a different prior for $\delta$, then this
parameter could be inferred even if instances in the data where the
animal stays still have been filtered out.

Based on these choices, in this paper we consider the models
\begin{IEEEeqnarray}{rCl}
  \MO: \,\LT \left(\mathrm{d}t\right)&=&\Lambda_{\Omega}
  \left(\GOm(\dd t)\right) := \rVert
  \GOmdot(t)\lVert\,\lambda_{\Omega}
  \left(\GOm(t)\right), \label{eq:M0}
  \nonumber\\
  \MOT: \,\LT \left(\mathrm{d}t\right)&=&\Lambda_{\Omega,\TTT}
  \left(\GOm(\dd t),\dd t\right) := \rVert
  \GOmdot(t)\lVert\,\lambda_{\Omega}
  \left(\GOm(t)\right)\,h_{\TTT}(t), \label{eq:M0T}
  \nonumber\\
  \MA: \,\LT \left(\mathrm{d}t\right)&=&\Lambda_{\Omega\times \Theta}
  \left(\GOm(\dd t), \theta(\dd t)\right) := \rVert \GOmdot(t)\lVert\,
  \lambda_{\Omega\times \Theta} \left(\GOm(t), \theta(t)\right),
  \label{eq:M1}
  \nonumber\\ 
  \MB: \,\LT \left(\mathrm{d}t\right)&=&\Lambda_{\Omega\times \Theta,
    \TTT} \left( \GOm(\dd t),\theta(\dd t), \dd t\right)\, := \rVert
  \GOmdot(t) \lVert\, \lambda_{\Omega \times \Theta} \left(\GOm(t),
    \theta(t)\right)\,h_{\TTT}(t), \label{eq:M2}\nonumber
\end{IEEEeqnarray}
where $\lambda_{\Omega}$, $\lambda_{\Omega \times \Theta}$ and
$h_{\TTT}$ denote positive continuous stochastic process on $\Omega$,
$\Omega \times \Theta$ and $\TTT$, respectively, and are described in
detail in Sections~\ref{sec:prior_models_fields} and
\ref{sec:cont_random_fields}.\

Model $\MO$ assumes that the variation in the number of spikes is
solely explained by how much area the animal explores and what regions
it visits in a period of time. In particular, in $\MO$, the expected
number of spikes per unit time at time $t$ is the product of the
expected number of spikes per travel distance at location
$\Gamma_\Omega(t)$, which is measured by
$\lambda_\Omega(\Gamma_\Omega(t))$, with the speed of the animal at
time $t$, which is measured by $\lVert\dot{\Gamma}_\Omega(t)\rVert$.\
Hence, $\MO$ is similar in spirit with the model formulation and
estimator \eqref{eq:estimator_sargolini} of
\cite{sargolini2006conjunctive} that we presented in
Section~\ref{sec:back}, and hence, it will serve as our baseline
model.\ Model $\MA$ assumes that the variation in the number of spikes
is explained by how much area the animal explores and what regions in
$\RR \times \SSS$ it visits in a period of time. Specifically, in
$\MOT$, the expected number of spikes per unit time at time $t$ is the
product of the expected number of spikes per travel distance at the
generalized coordinate $(\Gamma_\Omega(t)^\top, \theta(t))$, which is
measured by
$\lambda_{\Omega\times\Theta}(\Gamma_\Omega(t),\theta(t))$, with the
speed of the animal at time $t$.\

Models $\MOT$ and $\MB$ include an additional factor $h_{\TTT}$ which
modulates the firing rate as a function of experimental clock time,
see for example \cite{kassvent01} for a similar construction.\ Here,
the stochastic process $h_{\TTT}$ is dimensionless and hence, the
interpretation of $\lambda_{\Omega}$ and
$\lambda_{\Omega\times\Theta}$ is identical to that in models $\MO$
and $\MA$, respectively.\ The rationale for including a modulating
factor in models $\MOT$ and $\MB$ stems from the likely scenario of
additional, unobserved covariates, explaining variation in the number
and properties of firing events which is not explained by
$\lambda_\Omega$ or $\lambda_{\Omega\times\Theta}$ alone.



\subsection{Continuously specified finite-dimensional Gaussian random
  fields}
\label{sec:prior_models_fields}
Inference can be facilitated through specification of the law of the
stochastic process $\lX$.\ The most common and practically useful
choice is that of $\lX$ being a log-Gaussian process, i.e., a random
field on $\XXX$ for which the finite-dimensional distributions of
$(\log \lX(x_1), \ldots, \log \lX(x_m))$ are multivariate Gaussian for
each $1\leq m \leq \infty$ and each $(x_1, \ldots, x_m) \in \XXX$.\

The firing rate per unit displacement per unit generalized distance at
generalized coordinate $x\in\XXX$, and the modulating factor are
modelled as $\log \lX (x) = \beta + \xi_\XXX (x)$ and
$\log h_\TTT(t) = \xi_\TTT(t),$ where $\beta$ denotes a background
constant firing rate and $\xi_\XXX$ and $\xi_\TTT$ are stationary
zero-mean Gaussian processes on $\XXX$ and $\TTT$, respectively.\
Specifying a joint covariance structure over
$\XXX=\XXX_1 \times \cdots \times \XXX_m$ is highly demanding and it is
natural to simplify this task by treating the joint covariance as
separable in the input of the random field $\xi_\XXX$, i.e., to assume
the covariance function has a Kronecker product form
\citep[see][]{roug08} which asserts that
$\cov(\xi (x), \xi (x')) = \prod_{i=1}^m \zeta_i(x_i, x'_i), x, x' \in
\XXX,$,
where $\zeta_i\,:\,\XXX_i \rightarrow \RR$ is a valid covariance
function, i.e., a non-negative definite function on $\XXX_i$.\

For the differentiable manifolds $\XXX_1:=\Omega \subset \RR^2$ and
$\XXX_2:= \Theta:=[0,2\pi)$, we specify the latent random fields using
the finite dimensional models
$\xi_{\Omega}(s)= \bm \psi^{\Omega}(s)^{\top} \bm w^{\Omega}$,
$\xi_{\Theta}(\theta) = \bm \psi^{\Theta}(\theta)^{\top} \bm
w^{\Theta}$,
$\xi_{\Omega\times\Theta}(s, \theta) = \bm \psi^{\Theta}(\theta)^\top
\bm w^{\Omega \times \Theta} \bm \psi^{\Omega}(s)$ and
$\xi_{\TTT}(t) = \bm \psi^{\TTT}(t)^{\top} \bm w^\TTT$, where
$\bm w^{\Omega}_{p_{\Omega}\times 1} \sim \mathsf{N}(\bm 0, \bm
Q_{\Omega}(\bm \chi_{\Omega})^{-1})$,
$\bm w^{\Theta}_{p_{\Theta}\times 1} \sim \mathsf{N}(\bm 0, \bm
Q_{\Theta}(\bm \chi_{\Theta})^{-1})$,
$\text{vec}((\bm w_{p_{\Theta}\times p_{\Omega}}^{\Omega \times
  \Theta})^{\top})_{(p_{\Omega}\, p_{\Theta}) \times 1}\sim \mathsf{N}
(\bm 0, \bm Q_{\Omega}(\bm \chi_{\Omega})^{-1}\otimes\bm
Q_{\Theta}(\bm \chi_{\Theta})^{-1})$, and
$\bm w^{\TTT}_{p_{\TTT}\times 1} \sim \mathsf{N}(\bm 0, \bm
Q_{\TTT}(\bm \chi_{\TTT})^{-1})$,
are independent vectors of stochastic weights;
$\bm \psi^{\Omega} = (\psi_i^\Omega\,:\, i=1,\dots,p_{\Omega})$,
$\bm \psi^{\Theta}=(\psi_j^\Theta\,:\,i=1,\dots,p_{\Theta})$, and
$\bm \psi^\TTT=(\psi_k^\TTT\,:\,i=1,\dots,p_{\TTT})$, are
deterministic piecewise linear basis functions on $\RR^2, \Theta$ and
$\RR$, defined for each node on a spatial, circular, and temporal
mesh, respectively; $\text{vec}$ and $\otimes$ denote vectorization of
a matrix by stacking its column vectors on top of one another and
Kronecker product, respectively; and
$\bm Q_\Omega(\bm \chi_\Omega)$, $\bm Q_\Theta(\bm \chi_\Theta)$
and $\bm Q_\Omega(\bm \chi_\Omega)$ are positive-definite precision
matrices defined by
$\bm Q_{\cdot}(\bm \chi_{\cdot})= \tau_{\cdot}^2(\kappa_{\cdot}^4
C_{\cdot} + 2 \varphi_{\cdot} \kappa_{\cdot}^2 G_{\cdot} + G_{\cdot}
C_\cdot^{-1} G_\cdot)$, where
$\bm \chi_{\cdot}=(\kappa_\cdot, \tau_{\cdot}, \varphi_{\cdot})$,
$\kappa_\cdot, \tau_{\cdot} > 0$, $\varphi_{\cdot} \in (-1,\infty)$,
$C_{\cdot} = (C_{rc}^{\cdot})_{r,c=1}^{p_{\cdot}}$ with
$C_{rc}^{\cdot} = \langle \psi_r^{\cdot}, \psi_c^{\cdot}\rangle$, and
$G_{\cdot} = (G_{rc}^{\cdot})_{r,c=1}^{p_{\cdot}}$ with
$G_{rc}^{\cdot} = \langle \nabla \psi_r^{\cdot}, \nabla
\psi_c^{\cdot}\rangle$.\ Here, $\cdot$ is used as a placeholder for
$\Omega, \Theta$ and $\TTT$ and
$\langle\psi_r^\cdot,\psi_c^\cdot\rangle:=\int_{\cdot}
\psi_r^\cdot(e)\psi_c^\cdot(e) \lambda_{\cdot}(e)$, where
$\lambda_{\cdot}$ denotes the Lebesgue measure on the respective
domain.

The vectors of stochastic weights belong to the class of Gaussian
Markov random fields (GMRF) \citep{rueheld05} and their Markovian
properties are determined by the graph structure of the mesh.\ The
weights control the stochastic properties of
$\xi_{\Omega}, \xi_{\Theta}, \xi_{\Omega \times \Theta}$ and
$\xi_\TTT$, and are chosen so that the distibution of each linear
combination converges to the distribution of the solution of an SPDE,
see \cite{simpsetal16} for more
details.
\subsection{Limiting infinite-dimensional Gaussian random fields}
\label{sec:cont_random_fields}

When the parameters $\varphi_{\Omega}$ and $\varphi_{\Theta}$ are
equal to unity, then all random fields presented in
Section~\ref{sec:prior_models_fields} converge strongly
\citep{lindetal11, simpsetal16} to Gaussian random fields with
Mat\'{e}rn covariance functions, a class of random fields that is
widely adopted in spatial statistics \citep{stein1999interpolation}.\
In particular, the models that we presented in the previous section
are Hilbert space projections of solutions of the Whittle--Mat\'{e}rn
stochastic (partial) differential equation \citep{lindetal11,
  lindgren2022spde}.\ The continuously specified infinite-dimensional
limits of the Gaussian random fields $\xi_{\Omega}$, $\xi_{\Theta}$
and $\xi_{\TTT}$ are Gaussian random fields with power spectral
mass/density functions
\begin{IEEEeqnarray}{rCl}
  f_\Omega(\bm \omega) &=&
  \frac{\sigma_{\Omega}^2}{(2\pi)^{2}}\left(\frac{1
    }{\kappa_{\Omega}^4 + 2\, \varphi_{\Omega}\, \kappa_\Omega^2
      \,\lVert\bm \omega\rVert^2 +
      \lVert\bm \omega\rVert^4}\right)^{\alpha_\Omega/2}, \qquad \bm \omega \in \RR^2\label{eq:ps_omega}\\
  f_\Theta(\omega)&=& \frac{\sigma_{\Theta}^2}{2\pi}
  \left(\frac{1}{\kappa_{\Theta}^4+2\varphi_{\Theta}\kappa_{\Theta}^2\omega^2+
      \omega^4}\right)^{\alpha_\Theta/2}, \qquad \omega \in \ZZ\label{eq:ps_theta}\\
  f_\TTT(\omega) &=& \frac{\sigma_\TTT^2}{2\pi}
  \left(\frac{1}{\kappa_\TTT^4+2\varphi_\TTT\kappa_\TTT^2\omega^2+\omega^4}\right)^{\alpha_\TTT/2},\qquad
  \omega \in \RR. \label{eq:ps_T}
\end{IEEEeqnarray}
respectively, where the shape parameters satisfy
$\alpha_\Omega = \alpha_\Theta = \alpha_\TTT = 2$.\ 
We work with the special case where all shape parameters are equal to
2, for the simplicity and clarity this choice affords.
The covariance functions may be given in analytic form for some cases
only, e.g., on $\RR$ and $\RR^2$, inversion of the spectrum gives
\citep{lindetal11}
\begin{IEEEeqnarray*}{rClr} \zeta_\Omega(s) &=&
  \frac{\sigma_\Omega^2}{4 \pi \sin(\pi \gamma_\Omega) \kappa_\Omega^2
    i} \left[K_0\left\{\kappa_\Omega \lVert s \rVert
      \exp\left(-\frac{i \pi \gamma_\Omega}{2}\right)\right\}
    -K_0\left\{\kappa_\Omega \lVert s \rVert \exp\left(\frac{i \pi
          \gamma_\Omega}{2}\right)\right\}\right],& \\
  \zeta_\TTT(t) &=& \frac{\sigma_\TTT^2}{2 \sin(\pi \gamma_\TTT)
    \kappa_\TTT^3} \exp\left\{-\kappa_\TTT\cos\left(\frac{\pi
        \gamma_\TTT}{2} |t|\right)\right\} \sin\left\{\frac{\pi
      \gamma_\TTT}{2} + \kappa_\TTT\sin\left(\frac{\pi
        \gamma_\TTT}{2}|t|\right)\right\}, & \quad
\end{IEEEeqnarray*} where $\varphi_\Omega, \varphi_\TTT\in (-1,1]$ and
$\gamma_{\cdot} = \text{arccos}(\varphi_{\cdot})/\pi$.\ On $\SSS$ the
covariance is obtained in analytic form only for the case
$\varphi_\Theta = 1$ and is given in Proposition
\ref{prop:covariance_circular} below.
\begin{figure}[htbp!]
  \centering
  \includegraphics[trim= 0 20 0 10]{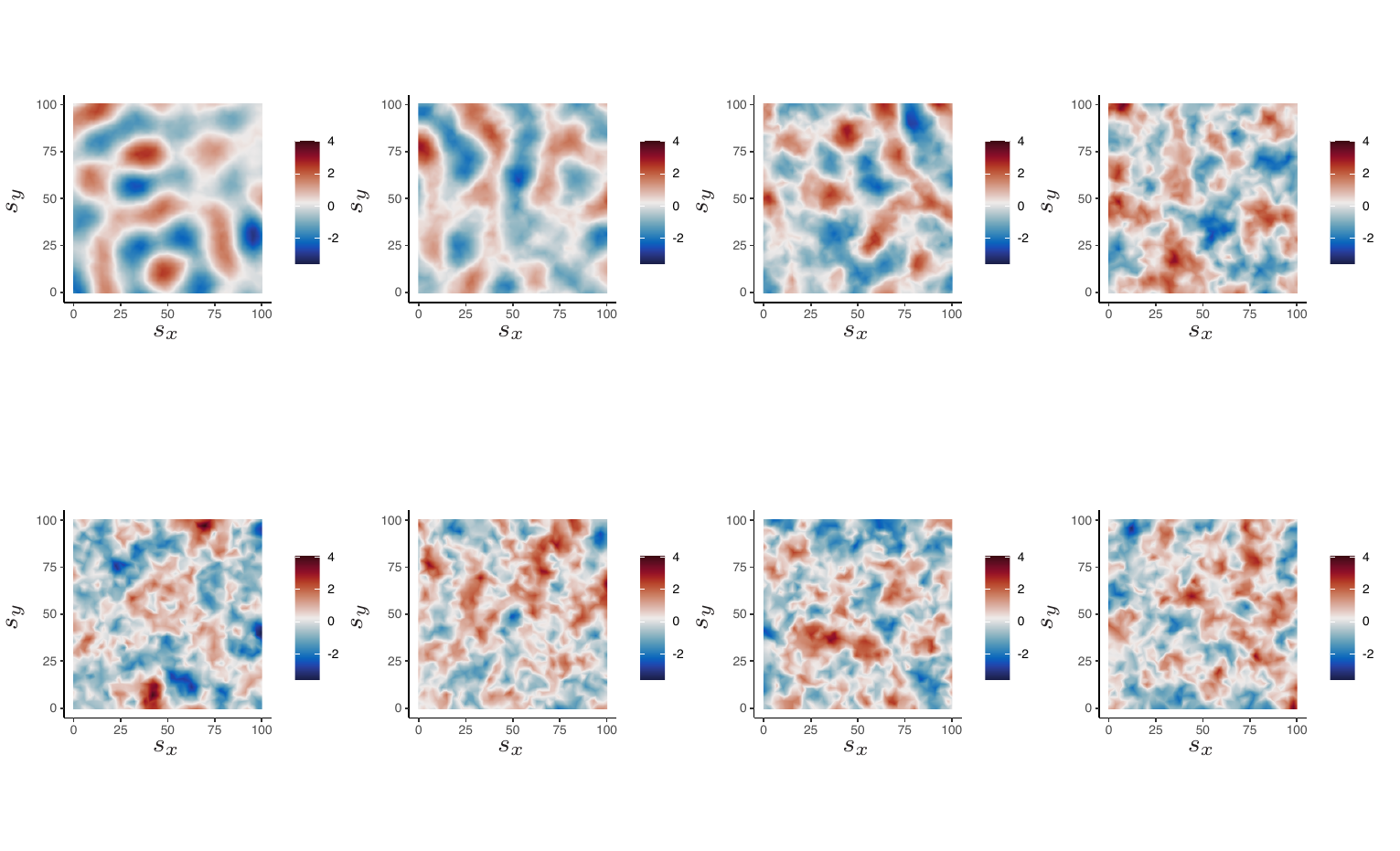}  
  \caption{Realizations of a zero-mean Gaussian random field
    $\xi_\Omega(s)$ with spectral density \eqref{eq:ps_omega}, where
    $\alpha_\Omega=2$, $\kappa_\Omega = 0.2$ and
    $s_\Omega^2=1$.\ \textit{Top, left to right}: shown for
    $\varphi_\Omega=-9.99/10, -9.9/10, -9/10, -2/10$. \textit{Bottom,
      right to left}: shown for
    $\varphi_\Omega=2/10, 9/10, 9.9/10, 9.99/10$.}
  \label{fig:osc_fields}
\end{figure}
\begin{proposition}
  \label{prop:covariance_circular}
  A stationary Gaussian process $\xi_\Theta(\theta)$, $\theta\in \SSS$,
  with power spectrum \eqref{eq:ps_theta} with $\varphi_{\Theta}=1$, has
  covariance function
\begin{IEEEeqnarray*}{rCl}
  \zeta_\Theta(\theta) &=& \frac{\sigma_\Theta^2}{4 \sinh^2(\pi
    \kappa_\Theta) \kappa_\Theta^2}\left\{\frac{|\theta|}{2}\,\cosh
    \{(2\pi -|\theta|)\kappa_\Theta\}+ \right. \\ \\
  && \qquad \qquad  \left.+ \left(\pi -
      \frac{|\theta|}{2}\right)\,\cosh(|\theta| \kappa_\Theta) +
    \kappa_{\Theta}^{-1} \left[\cosh\{(\pi-|\theta|)\kappa_\Theta)\,
      \sinh(\pi \kappa_\Theta\}\right]\right\}.
\end{IEEEeqnarray*}
\end{proposition}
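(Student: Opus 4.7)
The plan is to compute $\zeta_\Theta(\theta)$ directly as the inverse Fourier series of $f_\Theta$ on $\ZZ$. Setting $\varphi_\Theta = 1$ and $\alpha_\Theta = 2$ collapses the quartic denominator in \eqref{eq:ps_theta} to a perfect square, $\kappa_\Theta^4 + 2\kappa_\Theta^2\omega^2 + \omega^4 = (\kappa_\Theta^2 + \omega^2)^2$, so that
\[
  \zeta_\Theta(\theta) \;=\; \sum_{\omega \in \ZZ} f_\Theta(\omega)\, e^{i\omega\theta} \;=\; \frac{\sigma_\Theta^2}{2\pi} \sum_{n\in\ZZ} \frac{e^{in\theta}}{(n^2 + \kappa_\Theta^2)^2}.
\]
The whole task thus reduces to obtaining a closed form for the single series on the right.

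First I would invoke the classical identity
\[
  \sum_{n\in\ZZ} \frac{e^{in\theta}}{n^2 + a^2} \;=\; \frac{\pi \cosh\{a(\pi - \theta)\}}{a\,\sinh(\pi a)}, \qquad \theta \in [0, 2\pi],\; a>0,
\]
which follows either from the Poisson summation formula applied to $x\mapsto e^{-a|x|}/(2a)$, or from computing the Fourier coefficients on $[0,2\pi]$ of the $2\pi$-periodic extension of $\theta\mapsto \cosh\{a(\pi-\theta)\}$. Writing the right-hand side as $S(\theta,a)$, termwise differentiation (justified because the summands are $O(n^{-4})$) gives
\[
  \sum_{n\in\ZZ}\frac{e^{in\theta}}{(n^2+a^2)^2} \;=\; -\frac{\partial S}{\partial (a^2)} \;=\; -\frac{1}{2a}\,\frac{\partial S}{\partial a}.
\]
Applying the quotient rule to $S$ and then the product-to-sum identities
\[
  \cosh(A)\cosh(B) = \tfrac{1}{2}\{\cosh(A+B) + \cosh(A-B)\}, \qquad \sinh(A)\sinh(B) = \tfrac{1}{2}\{\cosh(A+B) - \cosh(A-B)\},
\]
with $A = a(\pi - \theta)$ and $B = \pi a$ (so that $A+B = a(2\pi - \theta)$ and $\cosh(A-B) = \cosh(a\theta)$) collapses the resulting numerator into three $\cosh$-terms whose coefficients, after dividing by $-2a$, are $a\theta/2$ in front of $\cosh\{a(2\pi-\theta)\}$, $a(2\pi-\theta)/2$ in front of $\cosh(a\theta)$, and $a^{-1}\sinh(\pi a)$ in front of $\cosh\{a(\pi-\theta)\}$.

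Substituting $a = \kappa_\Theta$, multiplying by the prefactor $\sigma_\Theta^2/(2\pi)$ and gathering common factors produces the overall normalising constant $\sigma_\Theta^2/\{4\kappa_\Theta^2\sinh^2(\pi\kappa_\Theta)\}$, matching the statement. Evenness of $f_\Theta$ makes $\zeta_\Theta$ even, and its $2\pi$-periodicity together with $\pi - |\theta|/2 = (2\pi-|\theta|)/2$ then justify the replacement $\theta \mapsto |\theta|$, extending the derived formula from $[0,2\pi]$ to all of $\SSS$. The main obstacle is the bookkeeping in the algebraic simplification: tracking signs through the quotient rule and the product-to-sum step so that the coefficients $|\theta|/2$ and $\pi - |\theta|/2$ emerge in front of the correct $\cosh$ arguments; matching the residual $\kappa_\Theta^{-1}\sinh(\pi\kappa_\Theta)$ factor in the third term is then automatic.
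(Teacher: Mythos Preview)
Your argument is correct and complete: the spectral mass function collapses to $\sigma_\Theta^2/\{2\pi(n^2+\kappa_\Theta^2)^2\}$, the classical series identity for $\sum_n e^{in\theta}/(n^2+a^2)$ is standard, differentiation in $a^2$ is justified by uniform convergence, and the hyperbolic product-to-sum simplification you outline does produce exactly the three terms in the statement with the correct prefactor.

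Your route differs genuinely from the paper's. The paper proceeds constructively from the finite-dimensional GMRF approximation: it writes the precision $\kappa^2 C + G$ on a regular cyclic mesh as an explicit circulant matrix, inverts it in closed form via the recurrence formulae of Searle (1979), takes the continuum limit $h\to 0$ to obtain the $\alpha=1$ covariance $r(\theta)=\cosh\{(\pi-\theta)\kappa\}/\{2\kappa\sinh(\pi\kappa)\}$, and then obtains the $\alpha=2$ covariance by self-convolution of $r$ over $[0,2\pi]$. Your approach bypasses the discretisation entirely and works directly in the spectral domain, trading the circulant-inverse and convolution machinery for one classical Fourier identity plus a parameter derivative. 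The paper's route has the pedagogical merit of tying the limiting covariance to the GMRF precision actually used in computation, and of making the iterated-SPDE structure ($\alpha=2$ as convolution of $\alpha=1$) explicit; your route is shorter, avoids the mesh limit altogether, and would extend more readily to other integer $\alpha$ by repeated differentiation in $a^2$.
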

A proof is given in Section \ref{app:covariance_circular} of the
supplement.\ Although the covariance is available in closed-form only
under special cases, this is not a substantial inconvenience, as what
is needed for practical model fitting are the discretised precision
matrices from Section~\ref{sec:prior_models_fields}, and for
convenience, e.g.\ for normalisation and parameter interpretation, the
marginal variance. The marginal variances for the processes on
$\RR^2$, $\SSS$ and $\RR$ are given in Propositions
\ref{prop:variance-on-R2}, \ref{prop:variance-on-S1} and
\ref{prop:variance-on-R}.
\begin{figure}[htbp!]  \centering
  \includegraphics[scale=.45, trim= 0 40 0 20]{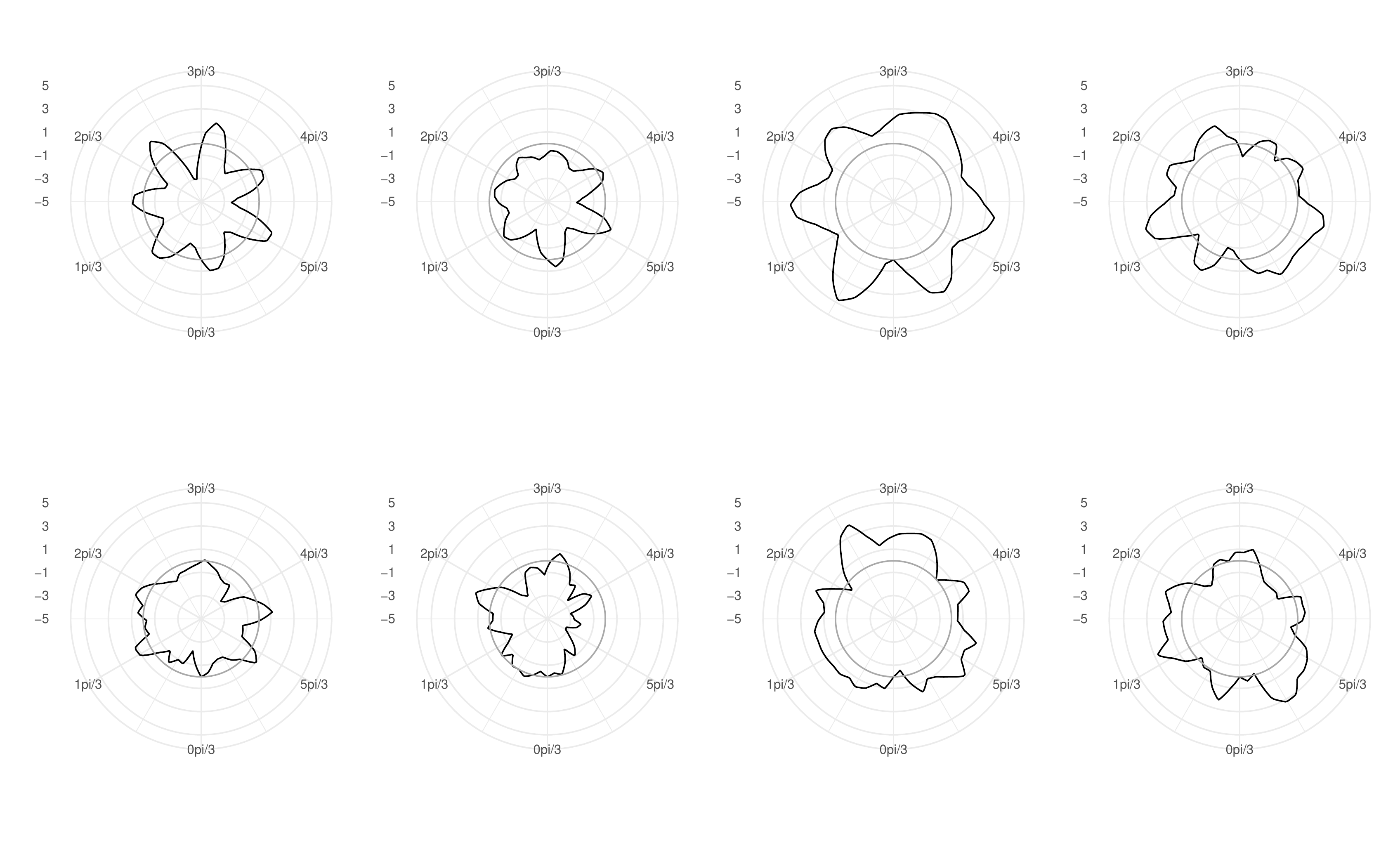}
  \caption{Realizations of $\xi_\Theta(\theta) - \xi_\Theta(0)$ where
    $\xi_\Theta$ is a zero-mean Gaussian random field with spectral
    mass function \eqref{eq:ps_theta}, where $\alpha_\TTT=2$,
    $\kappa_\Theta = 2\pi$ and $s_{\Theta}^2=1$. Realizations are
    plotted in polar coordinates, with scale units marked by circles
    (center refers to scale value $-5$ and outer circle refers to
    scale value $5$).\ The circle shown in dark grey refers to the
    scale value $0$. \textit{Top, left to right}: shown for
    $\varphi_\Theta=-9.99/10, -9.9/10, -9/10, -2/10$. \textit{Bottom,
      right to left}: shown for
    $\varphi_\Theta=2/10, 9/10, 9.9/10, 9.99/10$. }
  \label{fig:circ_fields}
\end{figure}
\begin{proposition}\label{prop:variance-on-R2}
  A stationary Gaussian process $\xi_\Omega(s)$, $s\in \RR^2$, with
  power spectrum \eqref{eq:ps_omega} has marginal variance
  $s_\Omega^2:=\text{var}\{\xi_\Omega(s)\}$ given by\normalfont
  \begin{equation*}
    s_\Omega^2= 
    \begin{dcases}
      \frac{\sigma_{\Omega}^2}{4\pi
      \kappa_\Omega^2 \sqrt{1-\varphi_\Omega^2}} \arccos \varphi_\Omega, & \text{underdamped}, \varphi_\Omega \in (-1,1),\\
      \frac{\sigma_\Omega^2}{4\pi\kappa_\Omega^2},& \text{critically damped}, \varphi_\Omega=1,\\
      \frac{\sigma_\Omega^2}{4\pi\kappa_\Omega^2
      \sqrt{\varphi_\Omega^2-1}}\left\{\frac{\pi}{2} -
      \text{atan}\left( \frac{\varphi_\Omega}{\sqrt{\varphi_\Omega^2-1}}\right)\right\},
          & \text{overdamped}, \varphi_{\Omega} \in (1,\infty).
    \end{dcases}
  \end{equation*}
\end{proposition}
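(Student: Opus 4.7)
The plan is to compute the marginal variance directly as the spectral integral
\[
s_\Omega^2 \;=\; \int_{\RR^2} f_\Omega(\bm\omega)\,\dd\bm\omega \;=\; \frac{\sigma_\Omega^2}{(2\pi)^2}\int_{\RR^2}\frac{\dd\bm\omega}{\kappa_\Omega^4 + 2\varphi_\Omega \kappa_\Omega^2 \lVert\bm\omega\rVert^2 + \lVert\bm\omega\rVert^4},
\]
noting that the denominator grows like $\lVert\bm\omega\rVert^4$ at infinity and stays bounded away from zero on compacta (for $\varphi_\Omega>-1$), so the integral converges. Because the integrand depends only on $\lVert\bm\omega\rVert$, the first step is to pass to polar coordinates; the angular integration yields a factor of $2\pi$ and leaves
\[
s_\Omega^2 \;=\; \frac{\sigma_\Omega^2}{2\pi}\int_0^\infty \frac{r\,\dd r}{r^4 + 2\varphi_\Omega \kappa_\Omega^2 r^2 + \kappa_\Omega^4}.
\]
The substitution $u=r^2$ then reduces the problem to the rational integral
\[
s_\Omega^2 \;=\; \frac{\sigma_\Omega^2}{4\pi}\int_0^\infty \frac{\dd u}{u^2 + 2\varphi_\Omega \kappa_\Omega^2 u + \kappa_\Omega^4},
\]
and completing the square in the denominator as $(u+\varphi_\Omega\kappa_\Omega^2)^2 + \kappa_\Omega^4(1-\varphi_\Omega^2)$ exposes the trichotomy on which the statement of the proposition rests.

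The second step is the case analysis. In the underdamped regime $\varphi_\Omega\in(-1,1)$, set $a:=\kappa_\Omega^2\sqrt{1-\varphi_\Omega^2}>0$; the antiderivative is $a^{-1}\arctan\{(u+\varphi_\Omega\kappa_\Omega^2)/a\}$, and evaluating between $0$ and $\infty$ gives $a^{-1}\{\pi/2 - \arctan(\varphi_\Omega/\sqrt{1-\varphi_\Omega^2})\}$. Invoking the identity $\arctan(x/\sqrt{1-x^2}) = \arcsin(x)$ for $x\in(-1,1)$, this becomes $\arccos(\varphi_\Omega)/(\kappa_\Omega^2\sqrt{1-\varphi_\Omega^2})$, matching the claim. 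In the critically damped case $\varphi_\Omega=1$, the denominator is the perfect square $(u+\kappa_\Omega^2)^2$, whose elementary antiderivative $-(u+\kappa_\Omega^2)^{-1}$ integrates to $1/\kappa_\Omega^2$, giving the stated expression. In the overdamped regime $\varphi_\Omega>1$, the denominator factors over $\RR$ into $(u-r_+)(u-r_-)$ with roots $r_\pm=-\kappa_\Omega^2(\varphi_\Omega\mp\sqrt{\varphi_\Omega^2-1})<0$, so I would apply partial fractions with splitting constant $2\kappa_\Omega^2\sqrt{\varphi_\Omega^2-1}$, evaluate the resulting logarithms from $0$ to $\infty$, and rewrite the result in the form stated by recognising the combination as the natural analytic continuation of $\arccos(\varphi_\Omega)/\sqrt{1-\varphi_\Omega^2}$ across $\varphi_\Omega=1$.

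There is no deep obstacle in this argument; it is essentially a careful one-dimensional integration after reduction of dimension. The only points demanding care are (i) the trigonometric identity used to recast the arctan-based antiderivative in the underdamped regime as an $\arccos$, which must be handled on the correct branch; and (ii) the matching of the overdamped logarithmic antiderivative to the arctan-based formula in the proposition, which amounts to checking that the two expressions for $\arccos$ extended beyond $\varphi_\Omega=1$ agree on the appropriate branch. Continuity of $s_\Omega^2$ as a function of $\varphi_\Omega$ across $\varphi_\Omega=1$ (where all three expressions should collapse to $\sigma_\Omega^2/(4\pi\kappa_\Omega^2)$) provides a useful consistency check at the end.
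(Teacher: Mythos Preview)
Your approach is essentially the same as the paper's: both pass to polar coordinates, substitute $u=r^2$, complete the square in the quadratic denominator, and then integrate according to the sign of the discriminant. The only difference is coverage---the paper derives only the overdamped case explicitly (citing \cite{lindetal11} for $\varphi_\Omega\in(-1,1]$), whereas you work through all three regimes; your recognition of the overdamped logarithmic antiderivative as the analytic continuation of $\arccos(\varphi_\Omega)/\sqrt{1-\varphi_\Omega^2}$ is exactly the right bridge between the cases.
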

\newpage
\begin{proposition}\label{prop:variance-on-S1}
  A stationary Gaussian process $\xi_\Theta(\theta)$, $\theta\in\SSS$,
  with power spectrum \eqref{eq:ps_theta} has marginal variance
  $s_\Theta^2:=\text{var}\{\xi_{\Theta} (\theta)\}$ given by\normalfont
  \[
    s_\Theta^2 =
    \begin{dcases}
      \frac{\sigma_{\Theta}^2}{4 \kappa_{\Theta}^3\sqrt{1-\phi_\Theta^2}} \frac{2\{y \sin(2\pi \kappa_\Theta x) + x \sinh(2\pi \kappa_\Theta y)\}}{\left\{\cos(2\pi \kappa_\Theta x)-\cosh(2\pi \kappa_\Theta y) \right\}}&\text{underdamped}, \varphi_{\Theta} \in (-1,1)\\
      \frac{\sigma_{\Theta}^2}{4 \kappa_{\Theta}^3}& \text{critically damped}, \varphi_{\Theta} = 1\\
      \frac{\sigma_\Theta^2 }{4 \kappa_\Theta^3 \sqrt{\varphi_\Theta^2-1}}
      \left(\frac{\cot \left(i z\right)} {i z}-\frac{\cot \left(i
      \overline{z} \right)} {i \overline{z}} \right),\quad & \text{overdamped}, \varphi_{\Theta} \in (1,\infty)
    \end{dcases}
  \]
  where $x=\cos(\gamma/2)$, $y=\sin(\gamma/2)$ and
  $\gamma\in[-\pi, 0)$ satisfies
  $\exp(\iota \gamma)=-\phi_\Theta- \sqrt{1-\phi_\Theta^2}$.
\end{proposition}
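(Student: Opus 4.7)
The plan is to use the fact that the marginal variance of a stationary Gaussian random field on $\SSS$ equals the total spectral mass,
\[
s_\Theta^2 \;=\; \sum_{\omega\in\ZZ} f_\Theta(\omega) \;=\; \frac{\sigma_\Theta^2}{2\pi}\sum_{\omega\in\ZZ}\frac{1}{\omega^4+2\varphi_\Theta\kappa_\Theta^2\omega^2+\kappa_\Theta^4},
\]
and to evaluate the lattice sum in closed form by partial fractions combined with the Mittag--Leffler expansion of the cotangent. The quartic in the denominator factors as $(\omega^2-\kappa_\Theta^2 e^{\iota\gamma})(\omega^2-\kappa_\Theta^2 e^{-\iota\gamma})$, where $\gamma$ is chosen so that $\cos\gamma=-\varphi_\Theta$ and the branch of $\sqrt{1-\varphi_\Theta^2}$ matches the convention stated in the proposition. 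The three regimes of the proposition correspond to $\gamma\in[-\pi,0)$ real (underdamped), to $\gamma\to -\pi$ with a coalescing double root (critically damped), and to $\gamma$ purely imaginary (overdamped).

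For the underdamped case I would partial-fraction decompose
\[
\frac{1}{(\omega^2-\kappa_\Theta^2 e^{\iota\gamma})(\omega^2-\kappa_\Theta^2 e^{-\iota\gamma})}\;=\;\frac{1}{2\iota\kappa_\Theta^2\sin\gamma}\left[\frac{1}{\omega^2-\kappa_\Theta^2 e^{\iota\gamma}}-\frac{1}{\omega^2-\kappa_\Theta^2 e^{-\iota\gamma}}\right],
\]
and then sum term-by-term using the classical identity $\sum_{\omega\in\ZZ}(\omega^2-a^2)^{-1}=-\pi\cot(\pi a)/a$ applied to $a_\pm=\kappa_\Theta e^{\pm\iota\gamma/2}$. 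Writing $e^{\iota\gamma/2}=x+\iota y$ so that $a_+=\pi\kappa_\Theta(x+\iota y)$ has real and imaginary parts $\pi\kappa_\Theta x$ and $\pi\kappa_\Theta y$, the standard decomposition $\cot(a+\iota b)=\{\sin(2a)-\iota\sinh(2b)\}/\{\cosh(2b)-\cos(2a)\}$ causes the imaginary parts of the two cotangents to cancel. Using $2xy=\sin\gamma=-\sqrt{1-\varphi_\Theta^2}$ to simplify the prefactor and multiplying by $\sigma_\Theta^2/(2\pi)$ recovers the stated underdamped expression.

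The critically damped case arises as the limit $\varphi_\Theta\to 1$, where the two conjugate roots coalesce and $\sum_\omega(\omega^2+\kappa_\Theta^2)^{-2}$ is obtained by differentiating the cotangent identity once with respect to $\kappa_\Theta^2$; the overdamped case follows by analytic continuation to purely imaginary $\gamma$, where $a_\pm$ become real and positive, $\cot(\iota z)=-\iota\coth(z)$ converts the cotangents into hyperbolic ones, and the same partial-fraction calculation produces the stated $\cot(\iota z)/(\iota z)-\cot(\iota\bar z)/(\iota\bar z)$ form. The main technical obstacle is the bookkeeping of complex branches: $\gamma$, $\sqrt{1-\varphi_\Theta^2}$, and $e^{\iota\gamma/2}$ must be selected consistently so that the partial-fraction prefactors remain well defined and non-zero, the Mittag--Leffler identity is applied within its domain of validity ($a_\pm\notin\ZZ$), and the resulting expression is manifestly real in each damping regime. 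Continuity of the three formulas at $\varphi_\Theta=\pm 1$, together with agreement with the integral approximation $\int_\RR f_\Theta(\omega)\,\dd\omega$ in the large-$\kappa_\Theta$ regime, provides a useful sanity check on the branch conventions.
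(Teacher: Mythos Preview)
Your approach is correct and essentially the same as the paper's: both evaluate $s_\Theta^2=\sum_{\omega\in\ZZ}f_\Theta(\omega)$ by factoring the quartic denominator, applying the cotangent series $\sum_{\omega\in\ZZ}(\omega^2-a^2)^{-1}=-\pi\cot(\pi a)/a$ to reach the $\cot(iz)/(iz)-\cot(i\bar z)/(i\bar z)$ form, and then reducing to real trigonometric/hyperbolic functions in the underdamped regime via the identity for $\coth(x\pm\iota y)/(x\pm\iota y)$. The paper simply compresses your partial-fraction and Mittag--Leffler step into the phrase ``direct calculation'' and does not treat the critically damped case separately, but these are cosmetic differences.
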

\begin{figure}[htbp!]  \centering
  \includegraphics{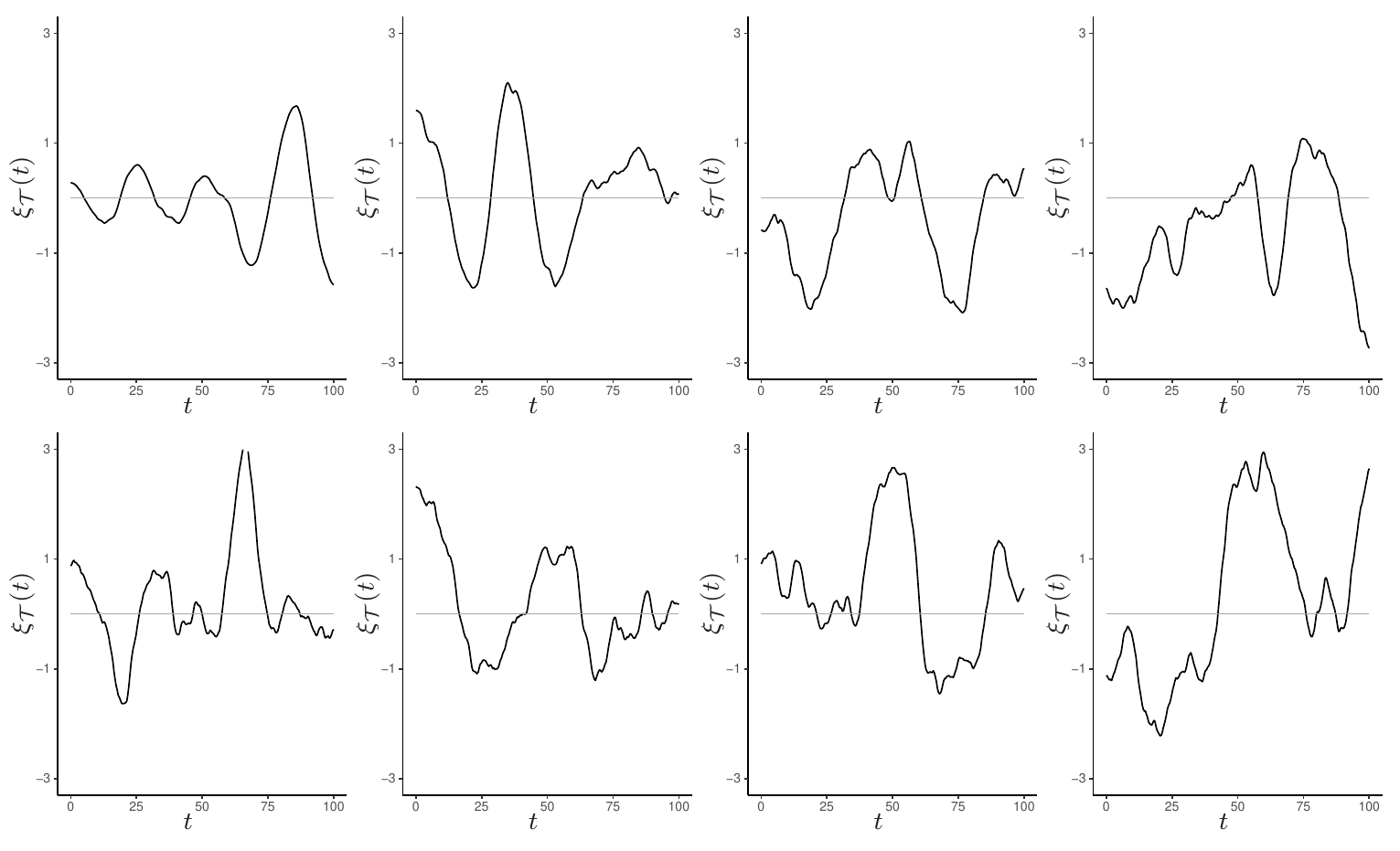}
  \caption{Realizations of $\xi_\TTT(t)$ where $\xi_\TTT$ is a
    zero-mean Gaussian random field with spectral density
    \eqref{eq:ps_T}, where $\alpha_\TTT=2$, $\kappa_\TTT = 2/10$ and
    $s_{\TTT}^2=1$. \textit{Top, left to right}:
    shown for $\varphi_\TTT=-9/10, -7/10, -4/10,
    -1/10$. \textit{Bottom, right to left}: shown for
    $\varphi_\TTT=1/10, 4/10, 7/10, 9/10$.
  }
  \label{fig:circ_fields}
\end{figure}
  \newpage
\begin{proposition}\label{prop:variance-on-R}
  A stationary Gaussian process $\xi_\TTT(t)$, $t\in\RR_+$, with power
  spectrum \eqref{eq:ps_T} has marginal variance
  $s_\TTT^2:=\text{\normalfont var}\{\xi_\TTT(t)\}$ given by 
  \[
    s_\TTT^2 =
    \frac{\sigma_\TTT^2}{4\kappa_\TTT^3}\left(\frac{2}{1+\vphi_\TTT}\right)^{1/2},\qquad
    \vphi_\TTT > -1,
  \]
  and is underdamped for $\vphi_\TTT\in(-1,1)$, critically damped for
  $\vphi_\TTT=1$, and overdamped for $\vphi_\TTT>1$.
\end{proposition}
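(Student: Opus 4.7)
The plan is to compute the marginal variance directly as
\[
  s_\TTT^2 = \int_{\RR} f_\TTT(\omega)\,\dd\omega
  = \frac{\sigma_\TTT^2}{2\pi}\int_{\RR}
  \frac{\dd\omega}{\omega^4 + 2\varphi_\TTT\kappa_\TTT^2\omega^2 + \kappa_\TTT^4},
\]
since for a stationary Gaussian process the variance is the integrated spectral density and $\alpha_\TTT = 2$ simplifies the exponent to unity. Everything reduces to evaluating the integral $I(\varphi,\kappa):=\int_\RR(\omega^4 + 2\varphi\kappa^2\omega^2 + \kappa^4)^{-1}\,\dd\omega$ for $\varphi > -1$ and $\kappa>0$, and then verifying that the value fits the single closed-form expression in the statement.

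The first step is to factor the quartic. For $\varphi \le 1$ I would use the real factorization
\[
  \omega^4 + 2\varphi\kappa^2\omega^2 + \kappa^4
  = (\omega^2 + a\omega + \kappa^2)(\omega^2 - a\omega + \kappa^2),
  \qquad a = \kappa\sqrt{2(1-\varphi)},
\]
which is valid precisely because $\varphi \le 1$. The roots in the upper half-plane are $\pm p + iq$ with $p = \kappa\sqrt{(1-\varphi)/2}$ and $q = \kappa\sqrt{(1+\varphi)/2}$ (so $p^2+q^2=\kappa^2$ and $q>0$ by the hypothesis $\varphi>-1$). A contour-integral / residue computation then yields a sum of residues equal to $1/(4iq\kappa^2)$, hence $I(\varphi,\kappa) = \pi/(2q\kappa^2) = \pi/\bigl(\kappa^3 \sqrt{2(1+\varphi)}\bigr)$.

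For $\varphi > 1$ the same factorization has complex coefficients, so I would instead write $\omega^4 + 2\varphi\kappa^2\omega^2 + \kappa^4 = (\omega^2 + b)(\omega^2 + c)$ with $b,c>0$, $bc=\kappa^4$, $b+c=2\varphi\kappa^2$, and use partial fractions plus the standard integral $\int_\RR (\omega^2 + b)^{-1}\dd\omega = \pi/\sqrt b$. Algebraic simplification using $(\sqrt b + \sqrt c)^2 = b + c + 2\sqrt{bc} = 2\kappa^2(1+\varphi)$ gives the same answer $I(\varphi,\kappa) = \pi/\bigl(\kappa^3 \sqrt{2(1+\varphi)}\bigr)$. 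The critically damped case $\varphi=1$ can be handled either as the limit of either regime (since $I$ is continuous in $\varphi$) or by observing that the denominator becomes $(\omega^2+\kappa^2)^2$ and using $\int_\RR (\omega^2+\kappa^2)^{-2}\dd\omega = \pi/(2\kappa^3)$, in agreement.

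Substituting the unified value of $I$ back produces
\[
  s_\TTT^2 = \frac{\sigma_\TTT^2}{2\pi}\cdot\frac{\pi}{\kappa_\TTT^3\sqrt{2(1+\varphi_\TTT)}}
  = \frac{\sigma_\TTT^2}{4\kappa_\TTT^3}\left(\frac{2}{1+\varphi_\TTT}\right)^{1/2},
\]
which is the claimed formula, valid for all $\varphi_\TTT > -1$. The designations underdamped, critically damped, overdamped are merely the names of the three regimes corresponding to complex, repeated, and real root pairs in the quartic, so they follow from the factorization step. The main obstacle is bookkeeping rather than conceptual: one must verify that the two different factorizations (real quadratic with cross term for $\varphi\le 1$, versus even quadratic for $\varphi\ge 1$) produce numerically identical answers, so that a single closed form holds across all damping regimes; the key identity that makes this work is $(\sqrt b + \sqrt c)^2 = 2\kappa^2(1+\varphi)$ in the overdamped case and $q^2 = \kappa^2(1+\varphi)/2$ in the underdamped case, both encoding the same quantity $1+\varphi$.
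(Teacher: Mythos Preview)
Your proof is correct. The overdamped case ($\varphi_\TTT>1$) follows the same partial-fraction route as the paper, though your algebra is tidier: you go straight to the identity $(\sqrt{b}+\sqrt{c})^2 = b+c+2\sqrt{bc} = 2\kappa^2(1+\varphi)$, whereas the paper introduces auxiliary roots $a_\pm$, $b_\pm$ and reaches the same simplification more circuitously.

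The underdamped case ($-1<\varphi_\TTT<1$) is where the approaches genuinely diverge. The paper substitutes $u=\omega^2/\kappa^2$ and invokes a table integral from Gradshteyn--Ryzhik (3.252.10) that produces an associated Legendre function $P^{-1/2}_{-1}(\varphi)$, which is then simplified via a half-angle formula. Your route---the real factorization $(\omega^2+a\omega+\kappa^2)(\omega^2-a\omega+\kappa^2)$ followed by residues at the two simple poles $\pm p+iq$ in the upper half-plane---is more elementary and self-contained, avoiding special functions altogether. It also makes the unification across regimes transparent: the quantity $q^2=\kappa^2(1+\varphi)/2$ in your underdamped computation and $(\sqrt{b}+\sqrt{c})^2=2\kappa^2(1+\varphi)$ in the overdamped one are visibly the same object, which is why a single formula covers all $\varphi_\TTT>-1$. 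The paper's use of Legendre functions obscures this symmetry somewhat, though it has the minor advantage of pointing toward how one might handle general exponents $\alpha_\TTT\neq 2$.
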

Proofs of Propositions \ref{prop:variance-on-R2}
\ref{prop:variance-on-S1} and \ref{prop:variance-on-R} are given in
Section \ref{sec:variances} of the supplement.
\section{Practical computation and parameter construction}
\label{sec:computation}
\subsection{Numerical evaluation of log-likelihood functions}
\label{sec:integrals}
The logarithm of the likelihood function \eqref{eq:likelihood}
consists of two terms: a stochastic integral and the evaluation of the
random field at the data points. Hence, the log-likelihood is
analytically intractable as it requires the integral of the intensity
function which cannot be calculated explicitly.\ The integral can,
however, be approximated numerically using the approach of
\cite{simpsetal16} who introduced a computationally efficient method
for performing inference on log-Gaussian Cox processes based on
continuously specified finite-dimensional Gaussian random fields.\ The
remainder of this section describes the integration schemes that we
use for the integrals in the log-likelihood functions of models \MO,
\MOT, \MA, and \MB, that is, the integration scheme that is used for
the integrals
\begin{IEEEeqnarray*}{rCl}
  I_{\Omega} &=& -e^\beta\sum_{i=1}^{{N}}
  \int_{{t}_{i-1}}^{{t}_{i}}\exp\left[ \bm \psi^{\Omega
    }\{s(t)\}^\top\, \bm w^{\Omega}\right] \GOm(\dd t),\\
  I_{\Omega,\TTT} &=& -e^\beta\sum_{i=1}^{{N}}
  \int_{{t}_{i-1}}^{{t}_{i}}\exp\left[ \bm \psi^{\Omega
    }\{s(t)\}^\top\, \bm w^{\Omega
    } + \bm \psi^{\TTT}(t)^\top \bm w^\TTT\right] \GOm(\dd t),\\
  I_{\Omega\times\Theta} &=&
  -e^\beta\sum_{i=1}^{{N}}\int_{0}^T\exp\left[\bm \psi^{\Omega \times
      \Theta}\{s(t), \theta(t)\}^\top\, \bm w^{\Omega \times
      \Theta}\right]\GOm(\dd t), \quad\text{and}\\\
  I_{\Omega\times\Theta,\TTT} &=&
  -e^\beta\sum_{i=1}^{{N}}\int_{0}^T\exp\left[\sum \bm \psi^{\Omega
      \times \Theta}\{s(t), \theta(t)\}^\top\, \bm w^{\Omega \times
      \Theta} + \bm \psi^{\TTT}(t)^\top \bm w^\TTT\right] \GOm(\dd t).
\end{IEEEeqnarray*}

To ensure a numerically stable and computationally efficient
integration scheme, we use the construction of \cite{yuanetal17} and
split every straight line segment
$L_i = \{(s(t),\theta(t))\,:\,t\in[t_{i-1}, t_i)\} \subset
\Omega\times\Theta$ into straight line segments
$L_{ij}\subset \Omega\times\Theta$, $j=1,\dots, J_i$, $J_i \in \NN$,
each of which resides in exactly one prism in $\Omega\times\Theta$
that is shaped by a distinct triangle in the spatial mesh and a
distinct arc in the circular mesh.\ To simplify the notation, we set
$(\widetilde{L}_{i}\,:\,i=0,\dots,\widetilde{N})= (L_{ij}\,:\,i =
0,\dots,N, j=1,\dots,J_i)$, where the elements in the latter vector
appear in lexicographic order, so that $\widetilde{L}_{i-1}$ and
$\widetilde{L}_i$ correspond to adjacent line segments, and
\smash{$\widetilde{N}=\sum_{i=0}^NJ_{i}$}.\ This segmentation induces,
for $i=0,\dots, \widetilde{N}$, a set of times $\widetilde{t}_i$,
locations $s(\widetilde{t}_i)$ and head-direction angles
$\theta(\widetilde{t}_i)$, which are computed via linear
interpolation.\

Using the trapezoidal rule with based on the partition
$\{\widetilde{t}_{i}\}$ of the interval $\TTT$, yields 
\begin{IEEEeqnarray}{rCl}
  \IEEEeqnarraymulticol{3}{l} {e^{-\beta}\,I_\Omega\doteq
    -\sum_{i=1}^{\widetilde{N}} \frac{\lvert \widetilde{L}_i
      \rvert}{2}\left[\exp\left( \bm
        \psi^{\Omega}\{s(\widetilde{t}_{i-1})\}^\top\, \bm
        w^{\Omega}\right) + \exp\left( \bm
        \psi^{\Omega}\{s(\widetilde{t}_i)\}^\top\,\bm
        w^{\Omega}\right)\right]}
  \nonumber\\
  \qquad &\doteq - \sum_{i=1}^{\widetilde{N}} \frac{\lvert
    \widetilde{L}_i \rvert}{2}\left[\bm \psi^{\Omega}
    \{s(\widetilde{t}_{i-1})\}^\top \, \exp(\bm w^{\Omega}) + \bm
    \psi^{\Omega} \{s(\widetilde{t}_{i})\}^\top \, \exp(\bm
    w^{\Omega}) \right] = -(\bm b^\Omega)^\top \exp(\bm w^\Omega),
  \label{eq:approx_exp}
\end{IEEEeqnarray}
where $\bm b_{\Omega}=(b_k^\Omega)\in \RR^{p_{\Omega}}$ with
$b_k^{\Omega} = \sum_{i=1}^{\widetilde{N}}
(\lvert\widetilde{L}_i\rvert/2)\sum_{j=i-1}^i
\psi_k^{\Omega}\{s(\widetilde{t}_j)\}$.\ We remark that the last
approximation in expression \eqref{eq:approx_exp} follows after
replacing
\smash{$h_j(t)=\exp\left[ \bm
    \psi^{\Omega}\{s(\widetilde{t}_{j})\}^\top\, \bm
    w^{\Omega}\right]$} by its linear interpolant which agrees with
$h_j(t)$, $j\in\{i-1,i\}$, at the corners of the triangle within which
the projection of $\widetilde{L}_i$ on $\Omega$ resides. The
integration scheme that is used for $I_{\Omega \times \Theta}$ is
identical to that used for $I_{\Omega}$ though it is useful to note
that the aforementioned approximation based on linear interpolation is
done in the same way except now the linear interpolant matches the
function at the corners of the prism in $\Omega\times \Theta$ within
which $\widetilde{L}_i$ resides.\ This gives the approximation
$e^{-\beta}\,I_{\Omega\times \Theta} \doteq -(\bm
b^{\Omega\times\Theta})^\top \exp(\bm w^{\Omega\times\Theta})$ where
$\bm b^{\Omega\times\Theta}\in\RR^{p_{\Omega}\, p_{\Theta}}$ with
$b_k^{\Omega\times\Theta} = \sum_{i=1}^{\widetilde{N}}
(\lvert\widetilde{L}_i\rvert/2)\sum_{j=i-1}^i
\psi_k^{\Omega\times\Theta}\{s(\widetilde{t}_j),\theta(\widetilde{t}_j)\}$.

The creation of the sequence of successive time points
\smash{$\{\widetilde{t}_i\}_{i=0}^{\widetilde{N}}$} allows a simple
construction of a mesh on $\TTT$ that can be used to construct
$h_\TTT=\log \xi_\TTT$ in models $\MOT$ and $\MB$.\ Although
$\{\widetilde{t}_j\}_{i=0}^{\widetilde{N}}$ may be used directly as a
temporal mesh for the construction of $\xi_\TTT$, there is a
computational caveat with such an approach since the requirement of
having a fine resolution both for the spatial and directional meshes,
which is needed to ensure the GMRF approximation is accurate, may lead
to an extremely large number of knots on $\TTT$. Therefore, the set
$\{\widetilde{t}_i\}$ is thinned by choosing a subsequence
\smash{$\{\widetilde{t}_{i_j}\}_{j=0}^{p_{\TTT}-1}$}.\ This allows to
reduce the number of mesh knots on $\TTT$ and ensures that each line
segment
$\{(t,s(t))\,:\, t\in[\widetilde{t}_{i-1}, \widetilde{t}_{i})\}
\subset \Omega\times\TTT$ resides in one, and only one, prism shaped
by a distinct triangle in the spatial mesh, a distinct arc in the
circular mesh and a distinct time interval in the temporal mesh.\
Using this temporal mesh, then the composite trapezoidal rule with
partition \smash{$\{\widetilde{t}_{i}\}_{i=0}^{\widetilde{N}}$} yields
the approximation
\begin{IEEEeqnarray*}{rCl}
  e^{-\beta}\,I_{\Omega, \TTT}&\doteq& -\exp(\bm w^\TTT)^\top \bm
  B^{\Omega,\TTT} \exp(\bm w^{\Omega}), \qquad\text{where }\bm
  B^{\Omega,\TTT} = (b_{rc}^{\Omega,\TTT}) \in \RR^{p_{\TTT}\times
    p_{\Omega}}, \text{ with}\\
  \quad b_{rc}^{\Omega , \TTT} &=& \sum_{i=1}^{\widetilde{N}}
  (\lvert\widetilde{L}_i\rvert/2)\sum_{j=i-1}^i
  [\psi_c^{\Omega}\{s(\widetilde{t}_j)\}\,
  \psi_r^{\TTT}(\widetilde{t}_j)], \qquad r=1,\dots, p_{\TTT}, \quad
  c=1,\dots, p_{\Omega}
\end{IEEEeqnarray*}
These approximations yield the following approximate likelihoods
\begin{IEEEeqnarray*}{rCl}
   \ell(\bm \theta^{\Omega}\mid\MO ) &\doteq& -e^\beta\, (\bm
  b^{\Omega})^\top \exp(\bm w^{\Omega}) + n \beta + \bm 1_n^\top \bm
  A^{\Omega}_{\text{obs}} \, \bm
  w^{\Omega}\\
   \ell(\bm \theta^{\Omega,\TTT}\mid\MOT ) &\doteq& -e^\beta\,
  \exp(\bm w^{\TTT}) \bm B^{\Omega,\TTT} \exp(\bm w^{\Omega}) + n
  \beta + \bm 1_n^\top \bm A^{\Omega,\TTT}_{\text{obs}} \, \bm
  w^{\Omega,\TTT}\\
   \ell(\bm \theta^{\Omega\times \Theta} \mid \MA) &\doteq&
  -e^\beta\, (\bm b^{\Omega\times \Theta})^\top \exp(\bm
  w^{\Omega \times \Theta}) + n \beta + \bm 1_n^\top \bm
  A^{\Omega\times \Theta}_{\text{obs}} \, \bm w^{\Omega \times
    \Theta} \\
   \ell(\bm \theta^{\Omega\times \Theta, \TTT}\mid \MB) &\doteq&
  -e^\beta\, \exp(\bm w^\TTT)^\top \bm
  B^{\Omega\times\Theta,\TTT} \exp(\bm w^{\Omega \times \Theta})+ n
  \beta + \bm 1_n^\top (\bm A_{\text{obs}}^\TTT \, \bm w^\TTT+ \bm
  A^{\Omega\times \Theta}_{\text{obs}} \, \bm w^{\Omega \times
    \Theta})
\end{IEEEeqnarray*}
for $\MO, \MOT, \MA$ and $\MB$, respectively.\ We remark that the
resulting approximate posterior distributions are close to the true
posterior since all expressions for the likelihoods can be written in
the form given by equation (3) in \citet{simpsetal16}.

\subsection{Prior distributions of hyperparameters}
\label{sec:prior}
The full Bayesian hierarchical model is completed upon specification
of the joint prior distribution of all hyperparameters in the model.\
We consider the transformations
$\rho_{\Omega} = \sqrt{8}/\kappa_\Omega, \rho_{\Theta} = \sqrt{8 \,
  (3/2)}/\kappa_\Theta$ and
$\rho_{\TTT} = \sqrt{8 \, (3/2)}/\kappa_\TTT$.\ The choice of
$\sqrt{8}$ and $\sqrt{8(3/2)}$ in the definition of $\rho_\Omega$ and
$\rho_\TTT$ follows \cite{lindetal11} and makes $\rho_\Omega$ and
$\rho_\TTT$ the distance and time at which the correlation of the
processes $\xi_\Omega$ and $\xi_\TTT$ is approximately 0.1,
respectively.\ For the cyclic processes $\xi_\Theta$, the
interpretation of $\kappa_\Theta$ depends on whether this parameter is
large or small.\ For large $\kappa_\Theta$, the cyclic aspect can be
ignored, and the interpretation in the cyclic Whittle-Mat\'{e}rn model
is that the correlation range is also
$\sqrt{8\nu_\Theta} /\kappa_\Theta$, where
$\nu_\Theta=\alpha_\Theta - d/2 = 3/2$ denotes the smoothness
parameter of the prior process.\ 
If $\sqrt{8\nu_\Theta} /\kappa_\Theta$ is small relative to $2\pi$,
then this value is proportional to the correlation range, that is, it
is again the distance at which the correlation is approximately 0.1,
and this also holds for a wide range of $\nu_\Theta$-values and not
only for the special case $\nu_\Theta=3/2$ that we consider here.\
However, when $\kappa_\Theta$ is small enough for that nominal
correlation range to become close to or larger than $2\pi$, the cyclic
behaviour of the process kicks in and the correlation no longer gets
close to zero, instead the covariance function $\zeta_\Theta$ grows
close to an increasing constant, plus a cosine function and additional
terms associated with higher frequencies in the spectrum.\ This latter
case did not appear to be particularly troublesome in our experiments
and hence, we proceed by using the aforementioned reparameterizations.

In what follows, a random variable $\phi$ is said to follow a
location-scale-Beta$(a, b, \ell, u)$ distribution, if
$\phi=^d \ell + (u-\ell) X$, where $X\sim \text{Beta}(a, b)$, with
$a,b > 0$. We assume
\begin{IEEEeqnarray*}{rClrrClrrCl}
  \varphi_{\Omega} &\sim& \text{location-scale-Beta}(a_{\Omega},
  b_{\Omega}, \ell, u),&\qquad&\varphi_{\Theta}&\sim&
  \delta_{\{1\}} ,&\qquad& \varphi_\TTT&\sim&\delta_{\{1\}}\\
  \rho_\Omega &\sim& \text{log-Normal}(\mu_{\Omega},
  \varsigma_{\Omega}^2),&\qquad& \rho_\Theta&\sim&
  \text{Exp}(\eta_\Theta), &\qquad& \rho_\TTT&\sim&\text{Exp}(\eta_\TTT),\\
  s_\Omega &\sim& \text{Exp}(\nu_\Omega),&\qquad& s_\Theta &\sim&
  \text{Exp}(\nu_\Theta), &\qquad& s_\TTT&\sim&\text{Exp}(\nu_\TTT),
\end{IEEEeqnarray*}
where $\delta_{\{1\}}$ denotes the Dirac delta distribution at
$\{1\}$, $\ell=-1$, $u=1$, $a_\Omega$, $b_\Omega$, $\varsigma_\Omega$,
$\eta_\Theta$, $\eta_\TTT$, $\nu_\Omega$, $\nu_\Theta$,
$\nu_\TTT \in \RR_+$ and $\mu_\Omega\in\RR$.\ The rationale for
choosing an exponential distribution for the standard deviation
parameters $s_\Omega$, $s_\Theta$ and $s_\TTT$, and the correlation
range parameters $\rho_\Theta$ and $\rho_\TTT$, is the the memoryless
property, indicating that we are, a priori, claiming relative
ignorance about the marginal standard deviation and the scale of the
process.\ On the other hand, a log-Normal prior for $\rho_\Omega$ was
chosen to include prior knowledge on the range of the spatial process,
by centering the distribution to sensible range values.


Regarding the damping coefficients $\varphi_{\Omega}$,
$\varphi_\Theta$ and $\varphi_\TTT$, we initially used a
location-scale-Beta with $\ell=-1$ and $u=1$ so that the prior
distribution is supported on the interval $(-1,1)$.\ That is, we
initially constrained all damping coefficients so that the prior
Gaussian random fields associated with the spatial, directional and
temporal effects were underdamped.\ Our results confirmed strong
evidence for underdamped behaviour in the spatial effect, but no
evidence for underdamped behaviour in the directional and temporal
effects. Hence, in what follows we use a degenerate prior distribution
at $\{1\}$.


\if0\blind{
We discuss a general way of constructing process models here by via
precision operators.\ This is directly connected to some of Wahbas
generalised spline penalty constructions that directly define spline
penalties via Sobolev space norms that are induced by inner products.

Let $\langle\, \cdot \; , \; \cdot\, \rangle$ denote the function
inner product for vector
fields 
$\int u(s)^\top v(s) ds$.\ Then for $(\kappa^2 - \Delta) x = \noise$
with Neumann boundaries, the precision operator is
\begin{equation}
<  Q(u,v) := \langle\kappa^2 u - \Delta u, \kappa^2 v - \Delta v\rangle
  = \kappa^4 \langle u,v\rangle+ 2\kappa^2\langle\nabla u, \nabla
  v\rangle + \langle-\Delta u, -\Delta v\rangle
\end{equation}
where the middle gradient-gradient inner product comes from Stokes'
theorem.\ For the oscillating processes we construct,
$Q(u,v) = \langle\kappa^2 u + \Delta u, \kappa^2 v + \Delta v\rangle +
2(1+\varphi)\kappa^2\langle\nabla u, \nabla v\rangle$ which coincides
with the above for $\varphi=1$, as expected.\ The first term is the
undampened oscillation precision, and the second term is the dampening
term.

If we let $U(\omega)$ and $V(\omega)$, $\omega \in \RR^d$, denote the
spectral versions of $u$ and $v$, we get (with
$\langle U,V\rangle=\int \overline{U(\omega)} V(\omega) d \omega$)
$Q(u,v) = \langle\kappa^2 U - |\omega|^2 U, \kappa^2 V - |\omega|^2
V\rangle + 2(1+\varphi)\kappa^2\langle i \omega U, i \omega
V\rangle$. This reveals a sense in which the $i \omega$ transfer
function appears.\ Similarly, we may reformulate a phase-shifted
version as
$\langle i \omega U, i \omega V\rangle = \int \overline{i \omega U} i
\omega V d\omega = \int \omega^t \omega \overline{U} \overline{V}
d\omega = \int |\omega|^2 \overline{U} V d\omega = \langle |\omega| U,
|\omega| V \rangle$ or $\langle i|\omega| U, i|\omega| V\rangle$,
which shows a connection of the dampening term to both $|\omega|$ and
$i|\omega|$. Specifically, the $|\omega|=\kappa$ frequencies that are
in the null space of the first term of the precision are dampened by
$2(1+\varphi)\kappa^4$ by the second term.\ We know that $|\omega|$ is
the transfer function for $(-\Delta)^{1/2}$, so that's the spatial
operator of the dampening.\ But unlike the 1-dimensional case, in the
general dimension case this doesn't give us a simple $L u = \noise$
form for the SPDE as a whole, but instead just a precision operator
with an undampened oscillation, plus a dampening term, where the
latter can be written either
as$\langle (-\Delta)^{1/2} u, (-\Delta)^{1/2} v\rangle$ or as
$\langle \nabla u, \nabla v\rangle$.\ \color{black}

$f(x) \partial_n g(x) |_{x=a} + f(x) \partial_n g(x) |_{x=b} = 0$
since $f(a)=f(b)$ and $\partial_n g(a) = -\partial_n g(b)$, since
$g'(a)=g'(b)$}
\fi

\section{Case study}
\label{sec:cs}
\subsection{The grid-cell data}
\label{sec:data}
We used data from a study that reports the activity of grid cells
in the medial entorhinal cortex of mice exploring a square arena with a polarizing cue on one of the
walls (\cite{gerlei2020grid}. \ The recordings were made with tetrodes inserted into the brain and attached to
chronically implanted microdrives.\ Recording sessions were
for a maximum of 90 minutes.\ A camera was attached above the arena to track the position and head-direction of mice. \ The voltage data were aquired using the OpenEphys platform \citep{siegle2017open}, spike events detected and then clustered using MountainSort
\citep{chung2017fully}.\ The detected clusters were evaluated based on quality
metrics for isolation, noise-overlap and peak signal to noise ratio
\citep{chung2017fully}. Clusters with firing rate $> 0.5\text {Hz}$,  isolation
$> 0.9$ and noise overlap $<0.05$ were retained.\

Grid cells were identified using established metrics
\citep{sargolini2006conjunctive}. Pure grid cells were identified
as neurons with a grid score $\geq 0.4$ and head-direction cells as
cells with a head-direction score $\geq 0.5$.\ Conjunctive grid cells
were defined as cells that passed both the grid cell and
head-direction cell criteria. Here, we report data from 1 pure grid
cell from 1 mouse.

\subsection{Cross-validation}
\label{sec:cv}
We compare the predictive performance of each model via 50-50\%
cross-validation.\ The full trajectory of the mouse is split up in to
connected segments of fixed lengths, half of which are selected as
training data.\ Specifically, we partition the time domain
$\TTT=[0,T]$ into $M$ consecutive non-overlapping intervals
$A_1,\dots,A_M$ of size $\tau=T/M$ seconds.\ The two training sets are
defined via the observed point pattern in $A_1, A_3,\dots$ and in
$A_2,A_4,\dots$, respectively, whilst the test sets are formed by the
complementary intervals, that is, by $A_2,A_4,\dots$ and by
$A_1, A_3,\dots$, respectively.\ 
\begin{figure}[htpb!]
  \centering
  \includegraphics[scale=1]{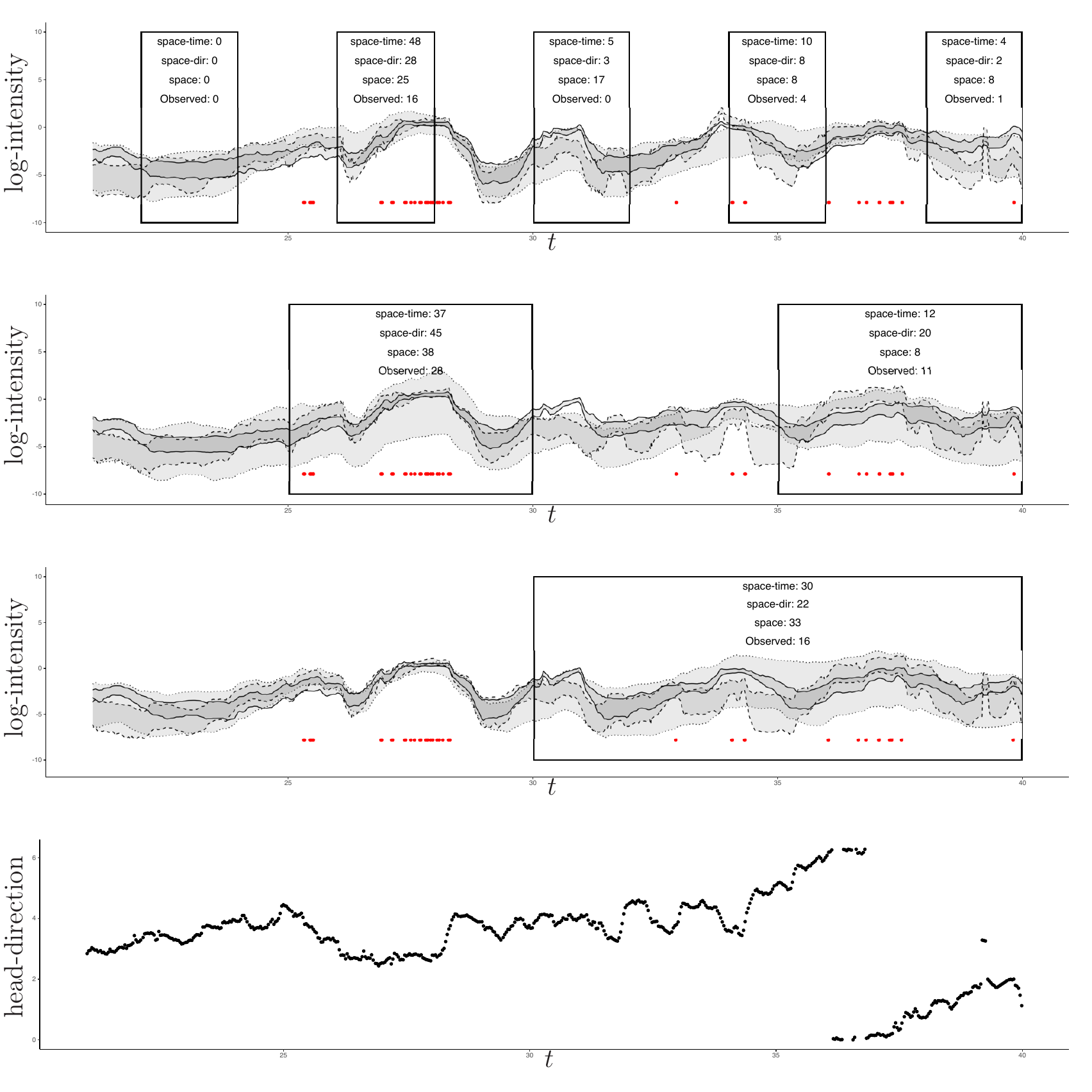}
  \caption{Illustration of output from cross-validation for models
    $\MO, \MA$ and $\MOT$, for the time period 20--40 seconds, and for
    three time split sizes, first row: $\tau=2$s, second row:
    $\tau=5$s, third row: $\tau=10$s.\ The last row shows the
    head-direction covariate over the same time period.  Solid-lines,
    dashed-lines and dotted lines show point-wise 95\% credible
    intervals of the logarithm of the intensity function
    $\lambda_\TTT(t)$ from models $\MO, \MA$ and $\MOT$, respectively.
    The test set is the union of time intervals associated with the
    boxed regions. The training set is the complement of the test
    set.  The statistics in the boxed regions show the mean of the
    posterior predictive distribution of the number of firings on the
    segment that each box corresponds to.\ The observed point pattern
    is superimposed on the first three panels and is shown with red dots. }
  \label{fig:cv_log_intensity}
\end{figure}

We use proper scoring rules \citep{gneit07} to evaluate the
performance of models based on their ability to predict the number of
firing events on segments from the training
set.\ 
The scoring rules that we
consider are negatively oriented so that smaller scores correspond to
better predictions.\ A negatively oriented scoring rule $S$ is called
proper if $\mathbb{E}\{S(G, Y)\} \leq \mathbb{E}\{S(F, Y)\}$ when
$Y\sim G$, i.e., on average, the true distribution $G$ will not give a
worse score than any other distribution $F$. 
Both of the scores that we consider, namely the squared error and
Dawid--Sebastiani scores, are proper.
\begin{table}[htbp!]
  \centering
  \begin{tabular}{r||lll|lll}
    \hline 
    &\multicolumn{3}{c|} {SE}& \multicolumn{3}{c} {DS}\\
    \hline\hline
    split & \MA & \MOT & \MB  &\MA  & \MOT & \MB  \\ 
    \hline\hline
    \multirow{6}{*}{Fold 1$\quad$}
   2s   & $-$24 & 76 & 139 & $-$6 & $-$8 & $-$8 \\              
   5s   & $-$65 & 1027 & 726 & $-$9 & $-$13 & $-$13 \\          
   10s  & $-$175 & 2767 & 2687 & $-$15 & $-$19 & $-$20 \\       
   20s  & $-$404 & 4643 & 2169 & $-$8 & $-$14 & $-$15 \\        
   30s  & $-$1520 & 1261 & $-$862 & $-$15 & $-$26 & $-$27 \\      
   40s  & $-$1281 & 4183 & 272 & $-$7 & $-$8 & $-$9 \\
        \hline
        \multirow{6}{*}{Fold 2$\quad$} 
   2s   & $-$29 & 129 & 81 & $-$8 & $-$11 & $-$9 \\             
   5s   & $-$41 & 307 & 280 & $-$22 & $-$28 & $-$29 \\          
   10s  & $-$105 & 503 & 322 & $-$14 & $-$17 & $-$17 \\         
   20s  & $-$727 & 2452 & 2208 & $-$29 & $-$103 & $-$99 \\      
   30s  & $-$561 & 7020 & 1548 & $-$9 & $-$10 & $-$12 \\        
   40s  & $-$1817 & 9516 & 724 & $-$21 & $-$27 & $-$28 \\
        \hline
        \multirow{6}{*}{Average$\quad$}
   2s   & $-$26 & 102 & 110 & $-$7 & $-$9 & $-$8 \\             
   5s   & $-$53 & 667 & 503 & $-$16 & $-$21 & $-$21 \\          
   10s  & $-$140 & 1635 & 1505 & $-$14 & $-$18 & $-$18 \\       
   20s  & $-$566 & 3548 & 2189 & $-$19 & $-$59 & $-$57 \\       
   30s  & $-$1041 & 4140 & 343 & $-$12 & $-$18 & $-$19 \\       
   40s  & $-$1549 & 6849 & 498 & $-$14 & $-$18 & $-$19 \\       
  \end{tabular}
  \caption{Average squared-error (SE) and Dawid--Sebastiani (DS) score
    difference between $\mathcal{M}$ and $\MO$ for
    $\mathcal{M}$$\in\{\MA,
    \MOT,\MB\}$, rounded to the nearest integer.}
    \label{tab:score-diff}
\end{table}
Suppose that
$n_i$ is the observed number of firing events on segment
$i$ in the test set, with
$F_i^j$ the predictive distribution function for the number of
firings, $N_i^j$, on this segment for model $j$, $1 \leq i \leq M, j
\in \{\MO, \MOT, \MA, \MB\},$ where
$M$ is the total number of segments in the test set.  The number of
firings, $N_i^j$, on segment $i$ for model
$j$, conditional on the latent parameters, is Poisson distributed with
mean $\int_{\mathcal{T}} \lambda_{\TTT}(t) ~\dd
t$, where
$\mathcal{T}$ is a time segment in the test set.  The predictive
probability mass function, $f_i^j$, for
$N_i^j$ is obtained by integrating out the latent parameters from the
product of the Poisson distribution and the posterior of the latent
parameters, which is an intractable integral.\ Although this Poisson
mixture distribution for
$N_i^j$ is not available in closed form, its mean and variance may be
obtained from the standard identities
\begin{align} \mathbb{E}(N_i^j) &= \mathbb{E}\{\mathbb{E}(N_i^j \mid
  \bm{x}^j)\} \label{ModExp} \\ \text{var}(N_i^j) &=
  \mathbb{E}\{\text{var}(N_i^j \mid \bm{x}^j)\} +
  \text{var}\{\mathbb{E}(N_i^j \mid \bm{x}^j)\} \label{ModVar}
\end{align} where $\bm{x}^j$ is the vector of latent parameters for
model $j$, and the expectations are all conditional on the training set.\
We may obtain high precision estimates of (\ref{ModExp})
and (\ref{ModVar}) via Monte Carlo simulation. In particular, if
$\bm{x}^j_1,\ldots,\bm{x}^j_K$ are $K$ independent realizations of the
vector of latent parameters for model $j,$ then we estimate
$\mathbb{E}(N_i^j)$ using
\begin{equation} \widehat{\mathbb{E}}(N_i^j) =
\frac{1}{K}\sum_{k=1}^{K}\mathbb{E}(N_i^j \mid \bm{x}^j_k)
\label{eq:posterior-expectation}
\end{equation}
where $\mathbb{E}(N_i^j \mid
\bm{x}^j_k)$ are obtained using the same numerical integration method
as used in Section~\ref{sec:integrals}. We estimate
$\text{var}(N_i^j)$ using
\begin{equation} \widehat{\text{var}}(N_i^j) =
  \frac{1}{K}\sum_{k=1}^{K}\mathbb{E}(N_i^j \mid \bm{x}^j_k) +
  \frac{1}{K}\sum_{k=1}^{K}\big\{\mathbb{E}(N_i^j \mid \bm{x}^j_k) -
  \widehat{\mathbb{E}}(N_i^j) \big\}^2
\label{eq:posterior-variance}
\end{equation} where in the first sum we have used the fact that
the conditional mean and variance of $N_i^j$ given $\bm{x}^j_k$ are
equal since $N_i^j$ given $\bm{x}^j_k$ is Poisson distributed.\ In the
\texttt{inlabru} package \citep{inlabru}, replications of the latent
field can be obtained using the \texttt{generate} function.

We consider scoring rules that only depend on the mean and variance of
the predictive distribution, as these are the only quantities that we
easily have access to.\ In particular, we consider the squared error
and Dawid--Sebastiani \citep{dawid1999coherent} scores, $S_{\text{SE}}$
and $S_{\text{DS}}$ respectively, defined by
\begin{align} S_{\text{SE}}(F_i^j, n_i) &= (n_i -
  \mu_i^j)^2 \label{SE_score} \\ S_{\text{DS}}(F_i^j, n_i) &=
  \bigg(\frac{n_i - \mu_i^j}{\sigma_i^j}\bigg)^2 +
  \log\{(\sigma_i^j)^2\} \label{DS_score}
\end{align} where $\mu_i^j = \mathbb{E}(N_i^j)$ and $\sigma_i^j =
\{\text{Var}(N_i^j)\}^{1/2}$ and $N_i^j \sim F_i^j$.\ As the exact
values of $\mu_i^j$ and $(\sigma_i^j)^2$ are intractable, we replace them with
their  estimates from expressions \eqref{eq:posterior-expectation} and \eqref{eq:posterior-variance}.

To compare the predictive performance of models
$\mathcal{M}\in\{\MA, \MOT, \MB\}$ with the reference model $\MO$,
using a given score $S$ that may be either $S_{\text{SE}}$ or
$S_{\text{DS}}$, we compute the set of model scores
$\{S(F_i^{\MO}, n_i,)\}_{i=1}^{M}$ and
$\{S(F_i^{\mathcal{M}}, n_i,)\}_{i=1}^{M}$ associated with models
$\MO$ and $\mathcal{M}$, respectively. The model with lowest mean
score is then preferred. To decide whether or not there is
statistically significant evidence in favour of one model over the
other we consider a hypothesis test with null hypothesis that the
model scores $S(F_i^{\mathcal{M}}, n_i)$ and $S(F_i^{\MO}, n_i)$ are
pairwise exchangeable for each $1 \leq i \leq M$, versus the
alternative that $\MO$ is on average worse than $\mathcal{M}$. The
test statistic $T_{\text{test}}$ used for comparing between models
$\MO$ and $\mathcal{M}$, is the observed mean difference in scores,
with negative values indicating that the $\mathcal{M}$ is better than
the $\MO$. Under the null hypothesis $S(F_i^{\MO}, n_i)$ and
$S(F_i^{\mathcal{M}}, n_i)$ are identically distributed, and so the
observed score $S(F_i^{\MO}, n_i)$ would be equally likely to be
produced by model $\mathcal{M},$ and similarly,
$S(F_i^{\mathcal{M}}, n_i)$ is equally likely to be produced by
$\MO$. Thus, under the null hypothesis, for any given score difference
$S^-_i = S(F_i^{\mathcal{M}}, n_i) - S(F_i^{\MO}, n_i)$, it would be
equally likely that $-S^-_i$ was observed. This observation is used to
construct a randomized testing procedure, given in Section
\ref{sec:perm-test} of the supplement, which allows us to approximate
the distribution of $T_{\text{test}}$ under the null hypothesis.\ The
output of the test is an unbiased estimate of the one sided $p$-value
for the test with null hypothesis that the scores $S(F_i^{\MO}, n_i)$
and $S(F_i^{\mathcal{M}}, n_i)$ are pairwise exchangeable for each
$1 \leq i \leq M$.\ Small values of $p$ give evidence against the null
hypothesis in favour of $\mathcal{M}$.\ Conversely, values of $p$
close to $1$ give evidence against the null hypothesis in favour of
$\MO$.\ In our applications of Algorithm \ref{TestAlg} we take
$J=10^6$.
\begin{table}[htbp!]
  \centering
  \begin{tabular}{r||lll|lll}
    \hline 
    &\multicolumn{3}{c|} {SE}& \multicolumn{3}{c} {DS}\\
    \hline\hline
    interval & \MA & \MOT & \MB  &\MA  & \MOT & \MB  \\ 
    \hline\hline
    \multirow{6}{*}{Fold 1$\quad$}
    2s  & 0 & 0.96 & 0.96 & 0 & 0 & 0 \\ 
    5s  & 0.02 & 1.00 & 0.92 & 0 & 0 & 0 \\ 
    10s & 0.07 & 1.00 & 1.00 & 0 & 0 & 0 \\ 
    20s & 0.05 & 1.00 & 0.89 & 0 & 0 & 0 \\ 
    30s & 0.05 & 0.83 & 0.11 & 0 & 0 & 0 \\ 
    40s & 0.01 & 1.00 & 0.63 & 0 & 0 & 0 \\
    \hline
    \multirow{6}{*}{Fold 2$\quad$}
    2s  & 0 & 0.98 & 0.71 & 0 & 0 & 0 \\
    5s  & 0.12 & 0.99 & 0.93 & 0 & 0 & 0 \\
    10s & 0.11 & 0.80 & 0.53 & 0 & 0 & 0 \\
    20s & 0.08 & 0.97 & 0.72 & 0 & 0 & 0 \\
    30s & 0.04 & 1.00 & 0.90 & 0 & 0.01 & 0 \\
    40s & 0.05 & 1.00 & 0.69 & 0 & 0 & 0 \\
    \hline
    \multirow{6}{*}{Folds combined$\quad$}
    2s  & 0 & 1.00 & 0.96 & 0 & 0 & 0 \\
    5s  & 0.01 & 1.00 & 0.98 & 0 & 0 & 0 \\
    10s & 0.02 & 1.00 & 0.99 & 0 & 0 & 0 \\
    20s & 0.01 & 1.00 & 0.91 & 0 & 0 & 0 \\
    30s & 0.01 & 1.00 & 0.65 & 0 & 0 & 0 \\
    40s & 0 & 1.00 & 0.72 & 0 & 0 & 0 \\
      \hline
    \end{tabular}
    \caption{$p$-values of the hypothesis that the squared-error (DS) and
      Dawid--Sebastiani (DS) scores between model $\mathcal{M}$ and
      $\MO$, $\mathcal{M}\in\{\MA, \MOT,
      \MB\}$, are pairwise exchangeable.}
      \label{tab:score-p-values}
\end{table}

Figure~\ref{fig:cv_log_intensity} illustrates the design of the
cross-validation scheme for a selection of models and interval sizes
fitted to the spike-train data that we analyze in in
Section~\ref{sec:analysis}.\ Table~\ref{tab:score-diff} shows the
average score differences for the two cross-validation folds, and
their overall averages, for six interval sizes $\tau=2$s, $5$s, $10$s,
$20$s, $30$s, and $40$s.\ The average SE score differences are all
negative for $\mathcal{M}_{\Omega\times\Theta}$ and positive for
$\mathcal{M}_{\Omega,\mathcal{T}}$ and
$\mathcal{M}_{\Omega\times\Theta,\TTT}$, indicating that the
temporally modulated models are, on average, less accurate in their
predictions than $\MO$.\ However, for the DS score differences, the
values are negative for all three models, indicating that the
additional model complexity provides a better match for the overall
variability of the system, as the DS score takes into account both
accuracy (bias) and precision (variability) of the predictions.\ These
results are further confirmed by the $p$-values shown in
Table~\ref{tab:score-p-values}.\ The proportions of negative score
differences, shown in Section \ref{sec:table_and_figures} of the
supplement (Table~\ref{tab:score-negative-proportion}), illustrate how
often each model had a better score than the reference model $\MO$,
for SE and DS, respectively. 

\subsection{Analysis}
\label{sec:analysis}
Here, we analyze further the spike-train of $n=6263$ firing events
shown in Figure \ref{fig:lphpp}.\ We fit all models from
Section~\ref{sec:pp_models} to the entire data set, using the values
$\mu_\Omega = 20$, $\varsigma_\Omega=0.4$, $a_\Omega = 2$,
$b_\Omega = 20$, $\nu_\Omega =1/2$, $\nu_\Theta = 1$,
$\nu_\TTT = 1/3$, $\eta_\Theta=1/(2\pi)$, and $\eta_\TTT=1/100$, for
the hyperparameters of the priors in Section \ref{sec:prior}.

Figure~\ref{fig:xiomega} shows the posterior mean and pointwise 95\%
credible intervals of $\xi_{\Omega}(s)$, $s\in\Omega$, for the
baseline model $\MO$ and its temporally modulated version $\MOT$o.\
The shape and the characteristics of the spatial effect overall agree
between the two models, but there are key differences such as the
smoothness and the uncertainty in the estimates, with $\MOT$ showing
less uncertain and smoother estimates.\ This indicates that the
inclusion of a temporally modulating factor captures overdispersion.\
Figure~\ref{fig:xiomega} shows the posterior mean of
$\xi_{\Omega\times\Theta}(s, \theta)$ as a function of $s\in\Omega$
for a range of $\theta$ values from $\MA$ and $\MB$.\ Pointwise 95\%
credible intervals are shown in Figures~\ref{fig:xiomega_CI_MA_low},
\ref{fig:xiomega_CI_MB_low}, \ref{fig:xiomega_CI_MA_upp} and
\ref{fig:xiomega_CI_MB_upp} of the supplement.\ Overall, the point and
interval estimates agree between the two models, with the estimates
from $\MB$ being smoother and less uncertain, which indicates again
that the temporally modulated model $\MB$ captures additional residual
variability in the intensity of the spike train.\ Allowing for an
interaction effect between space and head-direction shows that there
is considerable variation in the locations of the peaks, and in the
sizes and shapes of the fields locally at the peaks.\ Although the
spatial effect is strongly oscillating, the posterior estimates do not
rule out the possibility of ridges between the peaks, as for example,
in the bottom left and right regions of $\Omega$ where there can be
considerable intensity between fields of high intensity for certain
$\theta$ values.\
\begin{figure}[htpb!]
  \centering
  \includegraphics[scale=1]{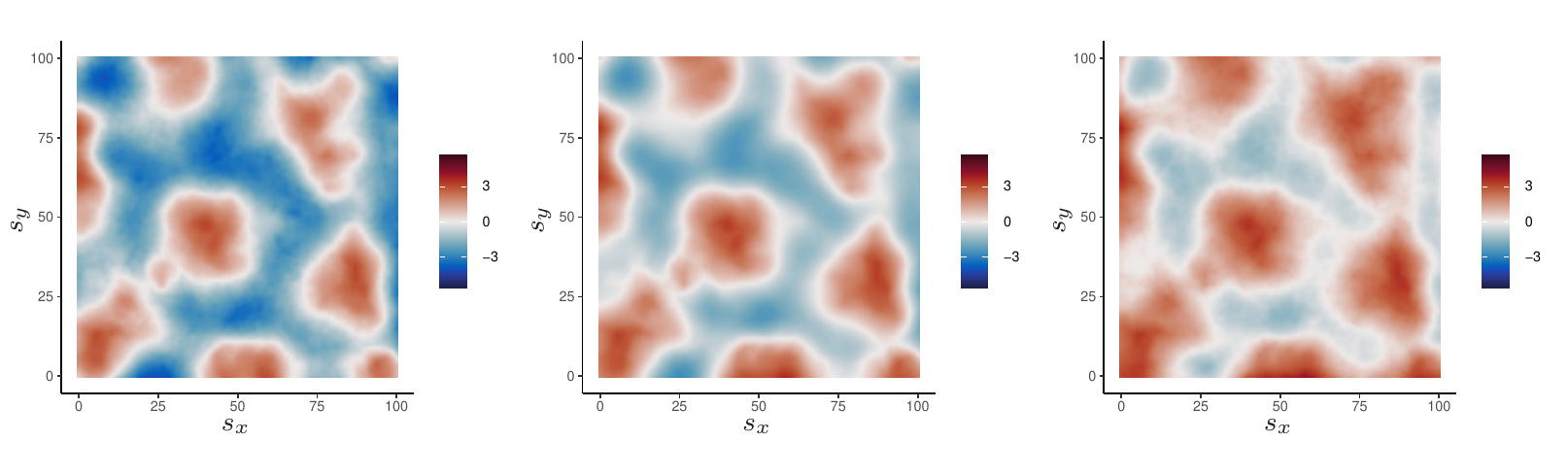}
  \caption{Posterior mean (center) and 95\% credible intervals (left:
    $0.025$ quantile, right: $0.975$) of $\xi_{\Omega}\{(s_x, s_y)\}$
    from $\MO$.}
  \includegraphics[scale=1]{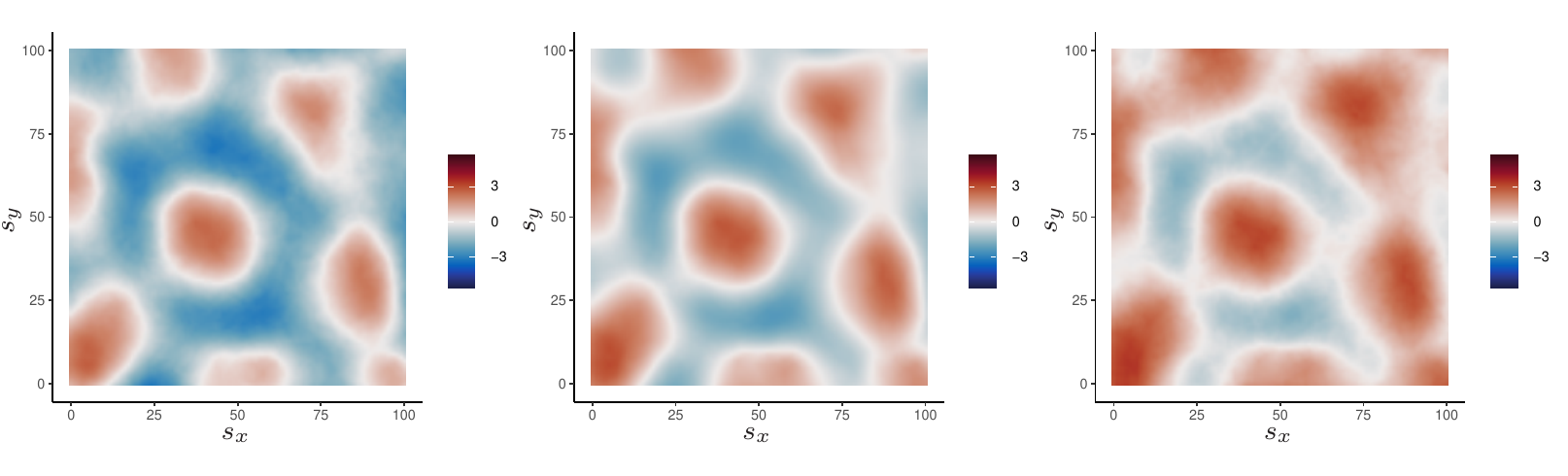}
  \caption{Posterior mean (center) and 95\% credible intervals (left: $0.025$ quantile, right: $0.975$) of
    $\xi_{\Omega}\{(s_x, s_y)\}$ from $\MOT$.} 
  \label{fig:xiomega}
\end{figure}
Last, Figure \ref{fig:xitheta} shows the posterior mean and 95\%
credible intervals of $\xi_{\Omega\times\Theta}(s, \theta)$ as a
function of $\theta\in\SSS$ for a range of $s$ values from $\MA$ and
$\MB$.\ The shape of the head-directional effect also varies with
location, and is more prominent in regions where the spatial effect is
low and less so in regions where the spatial effect is high.\ This is
also confirmed from Figure~\ref{fig:log_modulation_small} which shows
the posterior mean and 95\% credible intervals of the temporally
modulating component $\xi_\TTT(t)$ for a range of $t$ values from
$\MOT$ and $\MB$.\ The scatterplot shows that for the temporally
modulated models, the additional inclusion of a directional effect
shifts the posterior distribution of $\xi_\TTT$.\ When the modulating
effect $\xi_\TTT$ is low, then on average, the posterior mean from
$\MB$ is larger than the posterior mean from $\MA$ and vice versa.
\begin{figure}[htpb!]
  \centering
  \includegraphics[scale=1]{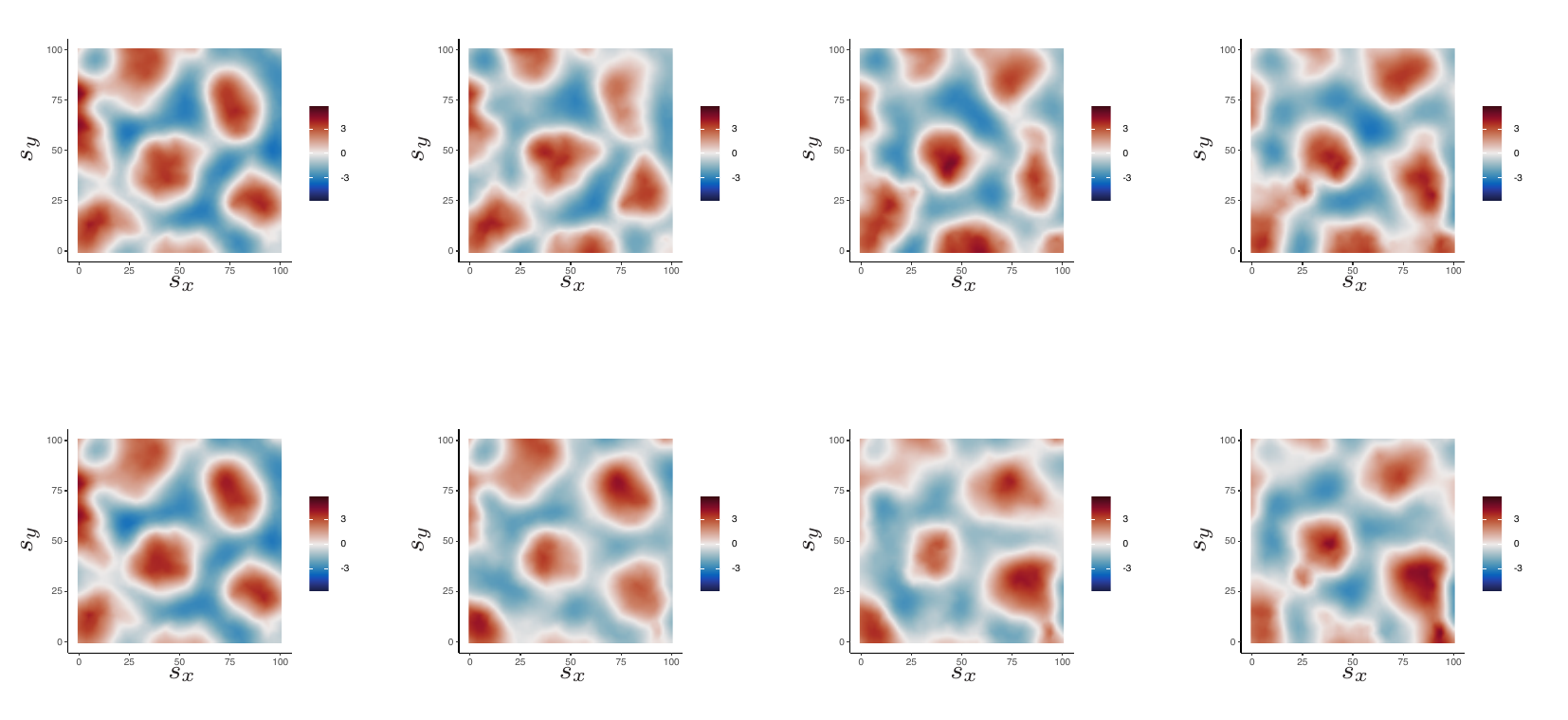}
  \caption{Posterior mean of
    $\xi_{\Omega\times\Theta}\{(s_x, s_y), \theta\}$ from $\MA$, shown
    in clockwise order for a sequence of equally spaced $\theta$
    values ranging from $\pi/8$ (top left) to $2\pi$ (bottom left).}
  \includegraphics[scale=1]{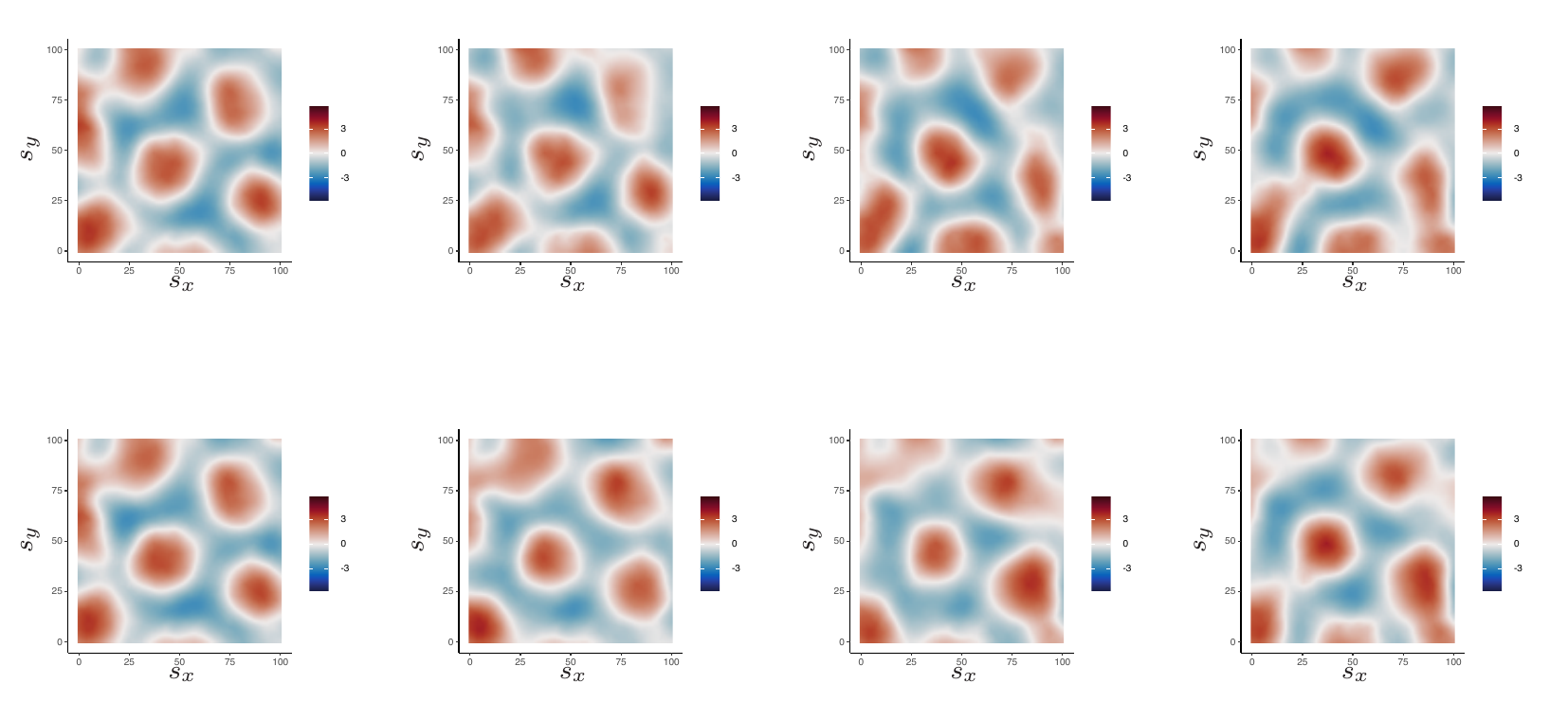}
  \caption{Posterior mean of
    $\xi_{\Omega\times\Theta}\{(s_x, s_y), \theta\}$ from $\MB$, shown
    in clockwise order for a sequence of equally spaced $\theta$
    values ranging from $\pi/8$ (top left) to $2\pi$ (bottom left).}
  \label{fig:xiomega}
\end{figure}

\begin{figure}[htpb!]
  \centering
  \includegraphics{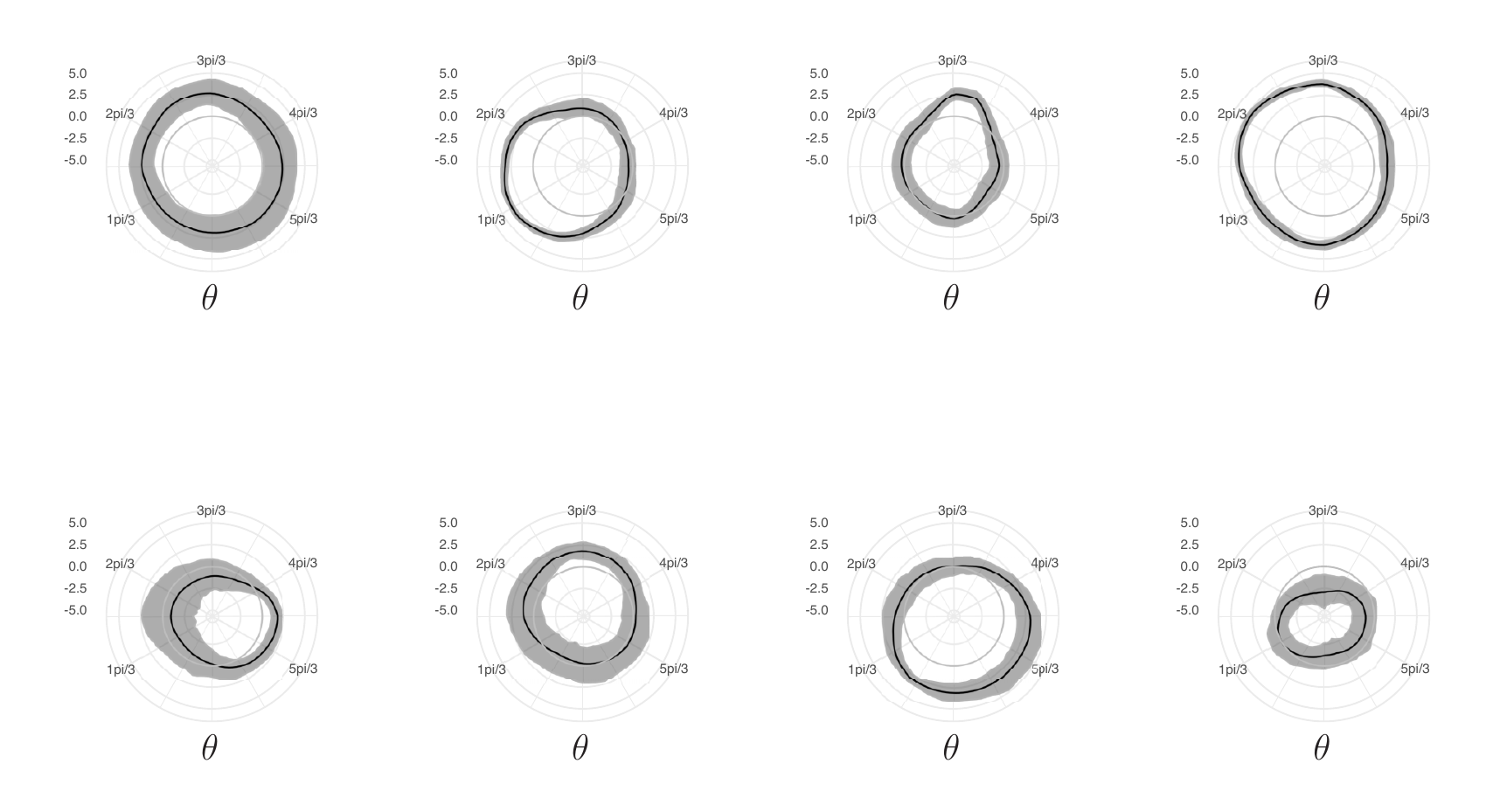}
  \caption{Posterior mean (solid black) and pointwise 95\% credible
    intervals (grey ribbon) of
    $\xi_{\Omega\times\Theta}\{(s_x, s_y), \theta\}$ from $\MA$, shown
    in clockwise order for a sequence of equally spaced values of
    $(s_x, s_y)$ on the diagonal ranging from $(s_x,s_y)=(0,0)$ to
    $(s_x,s_y)=(100,100)$.\ The circle shown in dark grey shows the
    constant function which is equal to zero.}
  \includegraphics{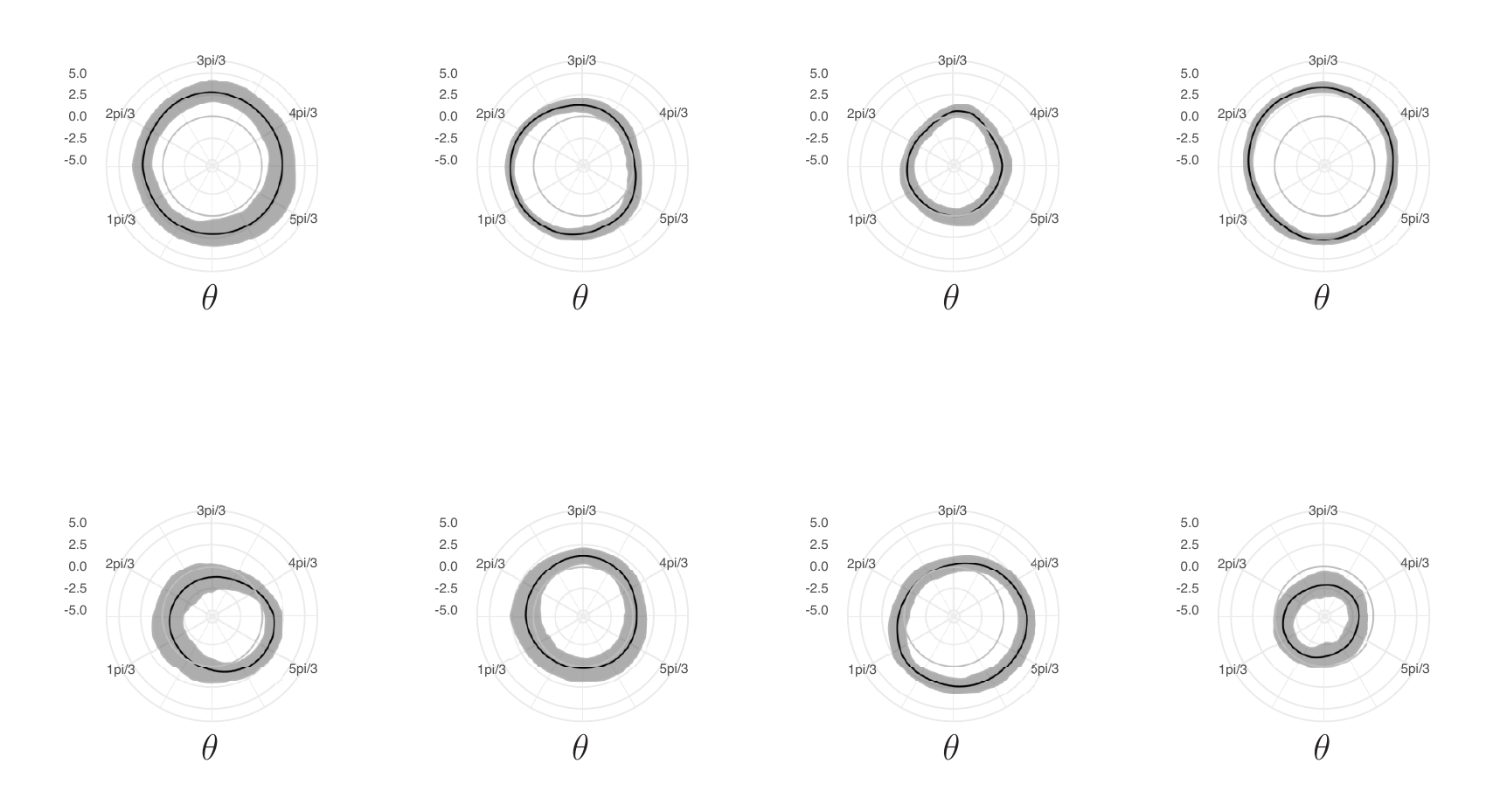}
  \caption{Posterior mean (solid black) and pointwise 95\% credible
    intervals (grey ribbon) of
    $\xi_{\Omega\times\Theta}\{(s_x, s_y), \theta\}$ from $\MB$, shown
    in clockwise order for a sequence of equally spaced values of
    $(s_x, s_y)$ on the diagonal ranging from $(s_x,s_y)=(0,0)$ to
    $(s_x,s_y)=(100,100)$.\ The circle shown in dark grey shows the
    constant function which is equal to zero.}
  \label{fig:xitheta}
\end{figure}

\begin{figure}[htpb!]
  \centering
  \includegraphics[scale=1]{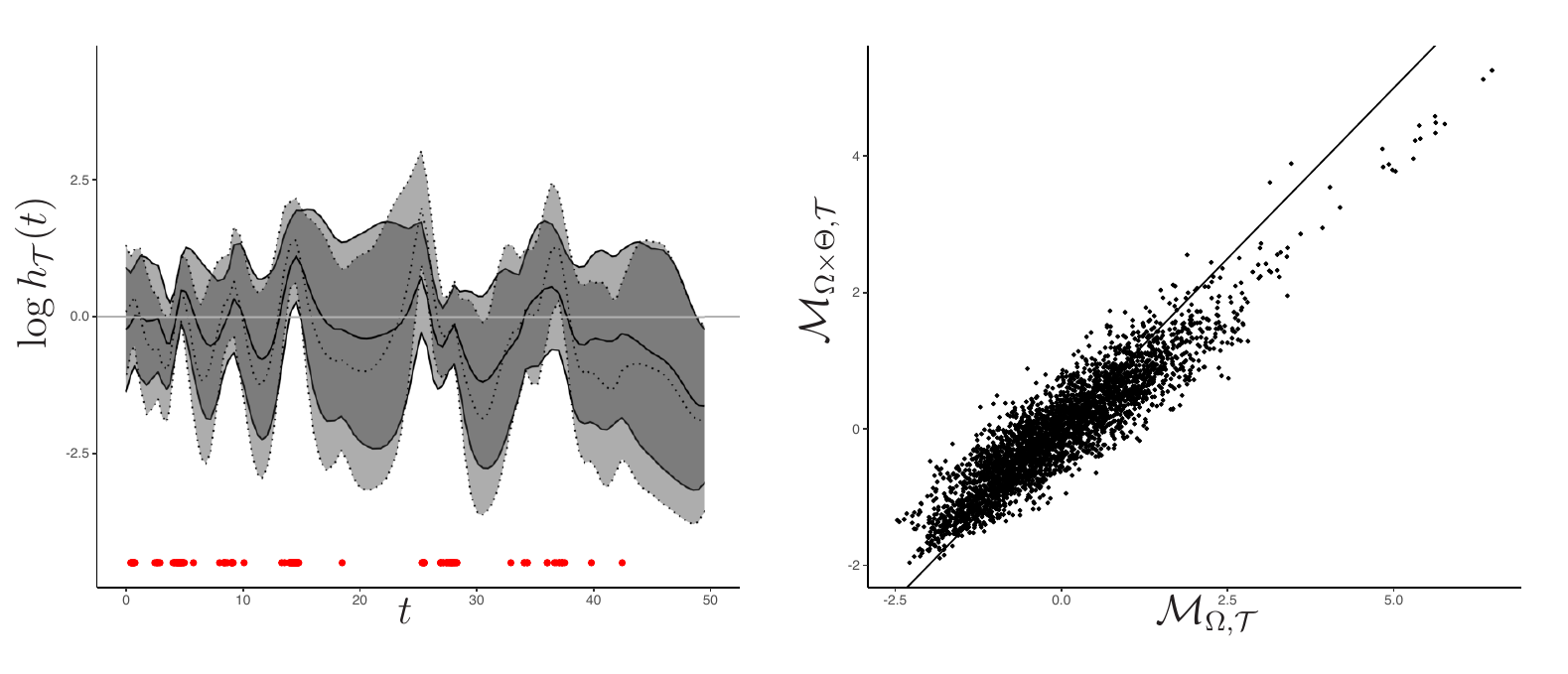}
  \caption{\textit{Left}: Posterior mean and pointwise 95\% credible
    intervals of $\xi_\TTT(t)$ from $\MOT$ (dotted) and $\MB$
    (solid) plotted over the time interval $0$--$50$
    seconds. \textit{Right}: scatterplot of posterior mean values of
    $\xi_\TTT(t)$ from $\MOT$ and $\MB$ at the nodes of the
    temporal mesh, which covers the entire interval of the
    experiment.}
  \label{fig:log_modulation_small}
\end{figure}

\section{Discussion}
\label{sec:discussion}

We introduce a novel and principled framework for analysis of neural
representations of space. We developed new statistical models, based
on Cox processes, that are amenable to modelling the relationship
between neural activity and covariates, including time, space and
other covariates such as head direction. Our approach rests on
modelling firing activity using Poisson point processes with latent
Gaussian effects. Our analyses in Sections 4.2 and 4.3 demonstrate
that our proposed models have better predictive performance than
currently widely used models
(e.g. \cite{sargolini2006conjunctive}. The latent prior Gaussian
effects accommodate persistent inhomogenous spatial-directional
patterns and overdispersion. Inference was performed in a fully
Bayesian manner, which allowed us to quantify uncertainty and to
identify effects that are typically missed out from most previous
analyses. Our approach supersedes a recently introduced framework that
uses Poisson generalised linear models with latent effects to model
grid cell activity (\cite{rule2023variational}): it does not involve
binning of data; incorporates more flexible priors that are
quasi-oscillatory; accommodates local distortions; by design does not
assume long-range dependence; and enables modelling of the effect of
covariates without the risk of ecological bias. While our focus here
is on the development and validation of the modelling framework, it
could be extended to generate metrics of grid field properties from
the model fitting hyper-parameters and statistical models for the
effects of experimental manipulations, e.g. effects of objects and
environmental geometry.

\small
\section*{Declaration} All animal procedures were performed under a UK
Home Office licence (PC198F2A0) in accordance with The University of
Edinburgh Animal Welfare committee's guidelines.  All procedures
complied with the Animals (Scientific Procedures) Act, 1986, and were
approved by a named Veterinary Surgeon and local ethical review
committee.

\section{Supplementary material}
\label{sec:supplement}

\subsection{$\mathsf{R}$ code and data set}
\begin{description}
\item[GitHub repository:] Repository containing \texttt{R} code to
  perform model fitting of the models described in the article:\newline
  \url{https://github.com/ipapasta/grid_fields}
  
\item[Grid cells data set:] Data set used in the analysis of our
  statistical methods in Section 4: \newline
  \url{https://datashare.ed.ac.uk/handle/10283/3674}
\end{description}

\subsection{Covariance of cyclic Mat\'{e}rn process}
\label{app:covariance_circular}
\begin{proof}[Proof of Proposition \ref{prop:covariance_circular}]
  For a regular mesh $0=k_0<\dots< k_{p-1} < k_{p}= 2\pi$ on $\Theta$
  with $k_i-k_{i-1} = h$, for all $i=1,\dots, p$, we have that
  $\kappa^2 C + G = \text{circulant}(\kappa^2 h+(2/h), -1/h, 0, \dots,
  0, -1/h)$. The elements of the inverse of a circulant matrix having
  only three non-zero elements in each row can be derived analytically
  from the solution of a recurrence equation \citep{sear79}, not
  least, we have that
  $(\kappa^2 C + G)^{-1} = \text{circulant}(a_0, a_1, \dots,
  a_{p-1})$, where
   \[
     a_j = \frac{z_1 z_2}{(-1/h)(z_1 -
       z_2)}\left(\frac{z_1^j}{1-z_1^{p}}-\frac{z_2^j}{1-z_2^{p}}\right),
     \qquad j=0,\dots,p-1
   \]
   with
   \[
     z_1 = \frac{-\kappa^2 h - (2/h)+ (\kappa^2 h^2 + 4
       \kappa^2)^{1/2}}{-2/h}\qquad\text{and} \qquad z_2 =
     \frac{\kappa^2 h + (2/h)+ (\kappa^2 h^2 + 4
       \kappa^2)^{1/2}}{2/h}.
   \]
   Let $\theta = 2\pi j/p\in\Theta$. Then, as $p \rightarrow \infty$
   and $h\rightarrow 0$, the covariance function
   $r(\theta) = a_{p\theta/(2\pi)}$ of the limiting process converges
   to
   \[
     r(\theta) \rightarrow \frac{1}{2\kappa}\frac{\cosh
       \{(\pi-\theta)\kappa\}}{\sinh (\pi \kappa)}\qquad\text{as
       $p\rightarrow\infty$},
   \]
   The case $\alpha=2$ is obtained by direct convolution of the
   covariance function associated with $\alpha=1$ \citep{lindetal11},
   that is,
   \[
     \frac{1}{(2\kappa)^2 \sinh(\pi \kappa)} \int_{0}^{2\pi}
     \cosh\{(\pi - \lvert u \lvert) \kappa\} \, \cosh\{(\pi - \lvert
     \theta-u \lvert) \kappa\}~\dd u =: \zeta_{\Theta}(\theta).
   \]
 \end{proof}
 \subsection{Variances of Gaussian random fields}
 \label{sec:variances}
 \begin{proof}[Proof of Proposition~\ref{prop:variance-on-R2}]
   Explicit expressions for the variance for the case where
   $\phi_\Omega \in (-1,1]$ are derived and documented in
   \cite{lindetal11}. Thus, we proceed by deriving the variance of the
   stationary Gaussian random field for the overdamped case, that is,
   when $\phi_\Omega > 1$.\ Dropping the manifold index from the
   parameters, we have that for $\phi > 1$, direct computation of the
   variance from the spectral density, after a change from Cartesian
   to polar coordinates \citep{lindgren2012stationary}, is equal to to
   \begin{IEEEeqnarray*}{rCl}
     \pVar\{\xi_\Omega (s)\}&=&(2\pi)\int_0^\infty \frac{r}{(2\pi)^2}
     \frac{\sigma^2}{\kappa^4 + 2\kappa^2 \phi r^2 + r^4} \dd r =
     \frac{\sigma^2}{4\pi} \int_0^\infty
     \frac{\dd u}{\kappa^4 + 2\kappa^2 \phi u + u^2}=\\ \\
     &&= \frac{\sigma^2}{4\pi} \int_0^\infty \frac{\dd
       u}{(u+\kappa^2\phi)^2 + \kappa^4 - \kappa^4\phi^2} =
     \frac{\sigma^2}{4\pi}
     \int\limits_{\frac{\phi}{\sqrt{\phi^2-1}}}^\infty
     \frac{\dd v}{(\kappa^4\phi^2 - \kappa^2)v^2 - (\kappa^4 \phi^2 - \kappa^2)}=\\ \\
     &&=\frac{\sigma^2}{(4\pi)\sqrt{\kappa^4\phi^2-\kappa^4}}
     \int\limits_{\frac{\phi}{\sqrt{\phi^2-1}}}^\infty (v^2 -
     1)^{-1}\dd v=
     \\\\
     &&= \frac{\sigma^2}{(4\pi)\sqrt{\kappa^4\phi^2-\kappa^4}}
     \left[\frac{\iota
         \pi}{2}\lim_{v\to\infty}\{\log(1+v)-\log(v-1)\}+\frac{1}{2}\log\left(1+\frac{\phi}{\sqrt{\phi^2-1}}\right)-\right.\\\\
     &&\left. \qquad \qquad \qquad \qquad \qquad -\frac{\iota
         \pi}{2}-\frac{1}{2}\log\left(\frac{\phi}{\sqrt{\phi^2-1}}-1\right)\right]\\\\
     &=& \frac{\sigma^2}{4\pi\kappa^2
       \sqrt{\varphi^2-1}}\left\{\frac{\pi}{2} - \text{atan}\{
       (\varphi^2-1)^{-1/2}\,{\varphi}\}\right\}
   \end{IEEEeqnarray*}
 \end{proof}

 \begin{proof}[Proof of Proposition \ref{prop:variance-on-S1}]
  Dropping the manifold index from the parameters, direct calculation
  of the variance from the spectral mass function gives that
\begin{equation}
  \pVar\{\xi_\Theta(\theta)\} = \sum_{\omega \in \ZZ}
  f_{\theta}(\omega)=
  \frac{\sigma ^2 \pi \kappa}{4 \kappa^3 \left(\varphi^2-1\right)^{1/2}}
  \left(\frac{\cot \left(i\, z\right)} {i \,z}-\frac{\cot \left(i\,
        \overline{z} \right)} {i \,\overline{z}} \right),\quad \varphi \in
  (-1,\infty),
  \label{eq:var_theta}
\end{equation}
where
$\overline{z}= \pi \kappa \left(\varphi + (1-\varphi ^2)^{1/2}\,i
\right)^{1/2}$.\ When $\varphi \in [1,\infty)$, $i\,z \in \RR$ is real
and expression \eqref{eq:var_theta} gives the variance that is stated
in the proposition.

For $\varphi \in (-1,1)$ we use the trigonometric
identity $\cot(i \, A)/i\, A = -\coth(A)/A$ for any $A\in
\mathbb{C}$, to rewrite expression \eqref{eq:var_theta} as
\[
  \pVar\{\xi_\Theta(\theta)\} = \frac{\sigma ^2 \pi \kappa}{4 \kappa^3
    \left(1-\varphi^2\right)^{1/2}\,i} \left(\frac{\coth \left(
        \overline{z}\right)} {\overline{z}}-\frac{\coth \left( z
      \right)} {z} \right),
\]
which, upon further simplification based on the identity
\[
  \frac{\coth (x+i y)}{x+i y}-\frac{\coth (x-i y)}{x-i y} \frac{2 \,
    i\, }{\left(x^2+y^2\right) } = \frac{x \sin (2 y)+y \sinh (2 x)}{\cos
    (2 y)-\cosh (2 x)}, \qquad \text{$x,y\in\RR$},
\]
proves the claim for the underdamped case. 
\end{proof}

\begin{proof}[Proof of Proposition \ref{prop:variance-on-R}]
  Dropping the manifold index from the parameters, for $-1<\varphi<1$,
  $\pVar\{\xi_\TTT (t)\} $ is equal to 
\begin{IEEEeqnarray*}{rCl}
  && \int_{-\infty}^\infty S(\omega;\kappa,\varphi) \md\omega =
  \frac{1}{\pi}\int_0^\infty
  \frac{1}{\kappa^4+2\varphi\kappa^2\omega^2+\omega^4} \md\omega \\\\
  &=&\frac{1}{\pi}\int_0^\infty
  \frac{\kappa}{2\sqrt{u}}\frac{\md u}{\kappa^4+2\varphi\kappa^4
    u+\kappa^4 u^2}  =\frac{1}{2\pi\kappa^3}\int_0^\infty
  \frac{\md u}{\sqrt{u}}\frac{1}{1+2\varphi u+ u^2}  \\\\
  &=&\frac{1}{2\pi\kappa^3} 2^{1/2}
  (1-\varphi^2)^{-1/4}\Gamma(3/2)B(1/2,3/2)P^{-1/2}_{-1}(\varphi)
\end{IEEEeqnarray*}
where the last expression follows after using \citet[expression
3.252.10 in][]{gradshteyn2014table} with $\mu=1/2$, $\nu=1$, and
$\cos(t)=\varphi$, so that $\sin(t)=\sqrt{1-\varphi^2}$.  \if0\blind{
  where in the third equality, we used changed variables according to
  $u=\omega^2/\kappa^2$.  } \fi Note that $\Gamma(3/2)=\sqrt{\pi}/2$
and $B(1/2,3/2)=\pi/2$. Hence, using the sin-half-angle formula
\citep[expression 8.754 in][]{gradshteyn2014table} we have that
\begin{IEEEeqnarray*}{rCl}
  \pVar\{\xi_\TTT(t)\} &=& \frac{\pi^{1/2}}{2^{5/2}\kappa^3}
  (1-\varphi^2)^{-1/4} P^{-1/2}_{-1}(\varphi)=
  \frac{\pi^{1/2}}{2^{5/2}\kappa^3} (1-\varphi^2)^{-1/4}
  \sqrt{\frac{2}{\pi\sqrt{1-\varphi^2}}}
  \frac{\sqrt{(1-\varphi)/2}}{1/2} \\ &=& \frac{2}{2^{5/2}\kappa^3}
  \sqrt{\frac{1-\varphi}{1-\varphi^2}} = \frac{1}{4\kappa^3}
  \sqrt{\frac{2}{1+\varphi}}.
\end{IEEEeqnarray*}

Now, consider the case $\varphi>1$. We have
\begin{IEEEeqnarray*}{rCl}
  \pVar\{\xi_\TTT(t)\} &=& 
  \frac{1}{2\pi}\int_{-\infty}^\infty
  \frac{\md\omega}{\kappa^4+2\varphi\kappa^2\omega^2+\omega^4}  =
  \frac{\kappa}{2\pi}\int_0^\infty
  \frac{\md u}{\kappa^4+2\varphi\kappa^4 u^2+\kappa^4 u^4} 
  \\&=&\frac{1}{2\pi\kappa^3}\int_{-\infty}^\infty \frac{\md u}{1+2\varphi
    u^2+ u^4} = \frac{1}{2\pi\kappa^3}\int_{-\infty}^\infty
  \frac{\md u}{2\sqrt{\varphi^2-1}}\left(
    \frac{1}{u^2+\varphi-\sqrt{\varphi^2-1}} -
    \frac{1}{u^2+\varphi+\sqrt{\varphi^2-1}} \right) 
\end{IEEEeqnarray*}
\if0\blind{in third equatity, change variables according to
  $u=\omega/\kappa\right$ }\fi
where the last step is obtained via the
polynomial roots $a_{\pm}=-\varphi\pm\sqrt{\varphi^2-1}<0$ (with
respect to $u^2$).

Next, we have that $\pVar\{\xi_\TTT(t)\}$ is equal to
\begin{IEEEeqnarray*}{rCl}
&&\frac{1}{4\pi\kappa^3\sqrt{\varphi^2-1}} \left(
\int_{-\infty}^\infty \frac{\md u}{u^2-a_+}  - \int_{-\infty}^\infty
\frac{\md u}{u^2-a_-} \right) =
\frac{1}{4\pi\kappa^3\sqrt{\varphi^2-1}} \left( \int_{-\infty}^\infty
\frac{\sqrt{-a_{+}} \md v}{-v^2a_{+}-a_{+}}  - \int_{-\infty}^\infty
\frac{\sqrt{-a_{-}}\md v}{-v^2a_{-} -a_{-}}  \right) \\\\&=&
\frac{1}{4\pi\kappa^3\sqrt{\varphi^2-1}} \left(
\frac{\sqrt{-a_+}}{-a_+} - \frac{\sqrt{-a_-}}{-a_-} \right)
\int_{-\infty}^\infty \frac{\md v}{v^2 + 1}  =
\frac{1}{4\pi\kappa^3\sqrt{\varphi^2-1}} \left( \frac{1}{\sqrt{-a_+}}
- \frac{1}{\sqrt{-a_-}} \right) \left[ \arctan(v)
\right]_{v=-\infty}^\infty \\&=&
\frac{1}{4\kappa^3\sqrt{\varphi+1}\sqrt{\varphi-1}} \left(
\frac{1}{\sqrt{\varphi-\sqrt{\varphi^2-1}}} -
\frac{1}{\sqrt{\varphi+\sqrt{\varphi^2-1}}} \right).
\end{IEEEeqnarray*}
\if0\blind{ in third equality
  $\left[{v=u/\sqrt{-a_{+}}\text{ and }v=u/\sqrt{-a_{-}}}\right]$ }\fi
Let $b_+=-a_-$ and $b_-=-a_+$.\ Then, using that
$b_-+b_+=\varphi-\sqrt{\varphi^2-1}+\varphi+\sqrt{\varphi^2-1}=2\varphi$
and
$b_-b_+=(\varphi-\sqrt{\varphi^2-1})(\varphi+\sqrt{\varphi^2-1})=\varphi^2-\varphi^2+1=1$,
\begin{IEEEeqnarray*}{rCl}
  \pVar\{\xi_\TTT(t)\} &=&
  \frac{1}{4\kappa^3\sqrt{\varphi+1}}\sqrt{\frac{2}{b_++b_--2\sqrt{b_-b_+}}}
  \left( \frac{1}{\sqrt{b_-}} - \frac{1}{\sqrt{b_+}} \right) \\\\&=&
  \frac{1}{4\kappa^3}\sqrt{\frac{2}{\varphi+1}}
  \sqrt{\frac{1}{(\sqrt{b_+}-\sqrt{b_-})^2}}
  \frac{\sqrt{b_+}-\sqrt{b_-}}{\sqrt{b_+b_-}} =
  \frac{1}{4\kappa^3}\sqrt{\frac{2}{\varphi+1}} .
\end{IEEEeqnarray*}
\end{proof}

\subsection{Additional Tables and Figures}
\label{sec:table_and_figures}
In this section, we provide additional Tables and Figures that are
referenced in the main text. Table~\ref{tab:score-negative-proportion}
shows the proportions of negative score differences between models
$\mathcal{M}$ and $\MO$.
\begin{table}[htbp!]
\centering
\begin{tabular}{r|rrr|rrr}
  \hline
  &\multicolumn{3}{c|} {SE}& \multicolumn{3}{c} {DS}\\
  \hline\hline
  interval & \MA & \MOT & \MB  &\MA  & \MOT & \MB  \\
  \hline
  \multirow{6}{*}{Fold 1$\quad$}
  2s & 0.61 & 0.57 & 0.58 & 0.60 & 0.52 & 0.55 \\ 
  5s & 0.53 & 0.44 & 0.51 & 0.56 & 0.46 & 0.50 \\ 
  10s& 0.57 & 0.30 & 0.33 & 0.66 & 0.58 & 0.62 \\ 
  20s& 0.64 & 0.25 & 0.48 & 0.64 & 0.64 & 0.66 \\ 
  30s& 0.69 & 0.52 & 0.59 & 0.76 & 0.76 & 0.83 \\ 
  40s& 0.73 & 0.18 & 0.50 & 0.73 & 0.68 & 0.68 \\
  \hline
  \multirow{6}{*}{Fold 2$\quad$}
  2s & 0.58 & 0.51 & 0.58 & 0.59 & 0.54 & 0.57 \\ 
  5s & 0.56 & 0.47 & 0.50 & 0.61 & 0.53 & 0.56 \\ 
  10s& 0.57 & 0.39 & 0.54 & 0.61 & 0.51 & 0.56 \\ 
  20s& 0.69 & 0.40 & 0.51 & 0.76 & 0.64 & 0.67 \\ 
  30s& 0.57 & 0.20 & 0.40 & 0.60 & 0.50 & 0.53 \\ 
  40s& 0.65 & 0.30 & 0.48 & 0.83 & 0.78 & 0.83 \\
  \hline
  \multirow{6}{*}{Folds combined$\quad$}  
  2s & 0.59 & 0.54 & 0.58 & 0.59 & 0.53 & 0.56 \\ 
  5s & 0.55 & 0.45 & 0.51 & 0.59 & 0.50 & 0.53 \\ 
  10s& 0.57 & 0.34 & 0.44 & 0.63 & 0.54 & 0.59 \\ 
  20s& 0.66 & 0.33 & 0.49 & 0.70 & 0.64 & 0.66 \\ 
  30s& 0.63 & 0.36 & 0.49 & 0.68 & 0.63 & 0.68 \\ 
  40s& 0.69 & 0.24 & 0.49 & 0.78 & 0.73 & 0.76 
\end{tabular}
\caption{Proportion of negative squared-error (SE) and
  Dawid--Sebastiani (DS) score differences between model $\mathcal{M}$
  and $\MO$, for $\mathcal{M}\in\{\MA, \MOT,\MB\}$.}
\label{tab:score-negative-proportion}
\end{table}
Figure~\ref{tab:score-stddev} shows conservative estimates of the
standard deviation of squared-error score differences between model
$\mathcal{M}$ and model $\MO$.
\begin{table}[htbp!]
  \centering
  \begin{tabular}{r||lll|lll}
    \hline 
    &\multicolumn{3}{c|} {SE}& \multicolumn{3}{c} {DS}\\
    \hline\hline
    interval & \MA & \MOT & \MB  &\MA  & \MOT & \MB  \\ 
    \hline\hline
    \multirow{6}{*}{Fold 1$\quad$}
    2s & 160 & 903 & 1886 & 39.6 & 52.4 & 51.9 \\           
    5s & 454 & 8339 & 8502 & 31.0 & 58.5 & 57.8 \\          
    10s& 1168 & 11147 & 17520 & 80.4 & 97.4 & 98.2 \\       
    20s& 1594 & 12504 & 12457 & 19.8 & 48.7 & 48.0 \\       
    30s& 4926 & 6444 & 3655 & 26.8 & 41.6 & 40.9 \\         
    40s& 2582 & 10903 & 3744 & 9.6 & 11.7 & 11.7 \\
    \hline
    \multirow{6}{*}{Fold 2$\quad$}
    2s & 238 & 1828 & 1846 & 54.7 & 63.2 & 65.8 \\          
    5s & 452 & 2490 & 2822 & 192.5 & 247.9 & 248.7 \\       
    10s& 793 & 4621 & 4513 & 76.2 & 81.5 & 69.6 \\          
    20s& 3423 & 10408 & 15885 & 113.9 & 583.5 & 551.6 \\    
    30s& 1728 & 12957 & 6586 & 19.7 & 25.6 & 25.3 \\        
    40s& 5292 & 16016 & 6953 & 48.3 & 56.1 & 55.8 \\
    \hline
    \multirow{6}{*}{Average$\quad$}
    2s & 199 & 1366 & 1866 & 47.2 & 57.8 & 58.8 \\          
    5s & 453 & 5414 & 5662 & 111.8 & 153.2 & 153.3 \\       
    10s& 980 & 7884 & 11016 & 78.3 & 89.5 & 83.9 \\         
    20s& 2508 & 11456 & 14171 & 66.9 & 316.1 & 299.8 \\     
    30s& 3327 & 9700 & 5120 & 23.3 & 33.6 & 33.1 \\         
    40s& 3937 & 13460 & 5348 & 29.0 & 33.9 & 33.7      
  \end{tabular}
  \caption{Conservative estimates of standard deviation of
    squared-error (SE) and Dawid--Sebastiani (DS) score differences
    between model $\mathcal{M}$ and model $\MO$ with
    $\mathcal{M}$$\in\{\MA,
    \MOT,\MB\}$, rounded to the nearest integer.}
    \label{tab:score-stddev}
  \end{table}
  Figures \ref{fig:xiomega_CI_MA_low}, \ref{fig:xiomega_CI_MB_low},
  \ref{fig:xiomega_CI_MA_upp} and \ref{fig:xiomega_CI_MB_upp} show
  pointwise 95\% credible intervals of
  $\xi_{\Omega\times\Theta}(s, \theta)$ for models $\MA$ and $\MB$.
  \begin{figure}[htpb!]
    \centering
    \includegraphics[scale=1]{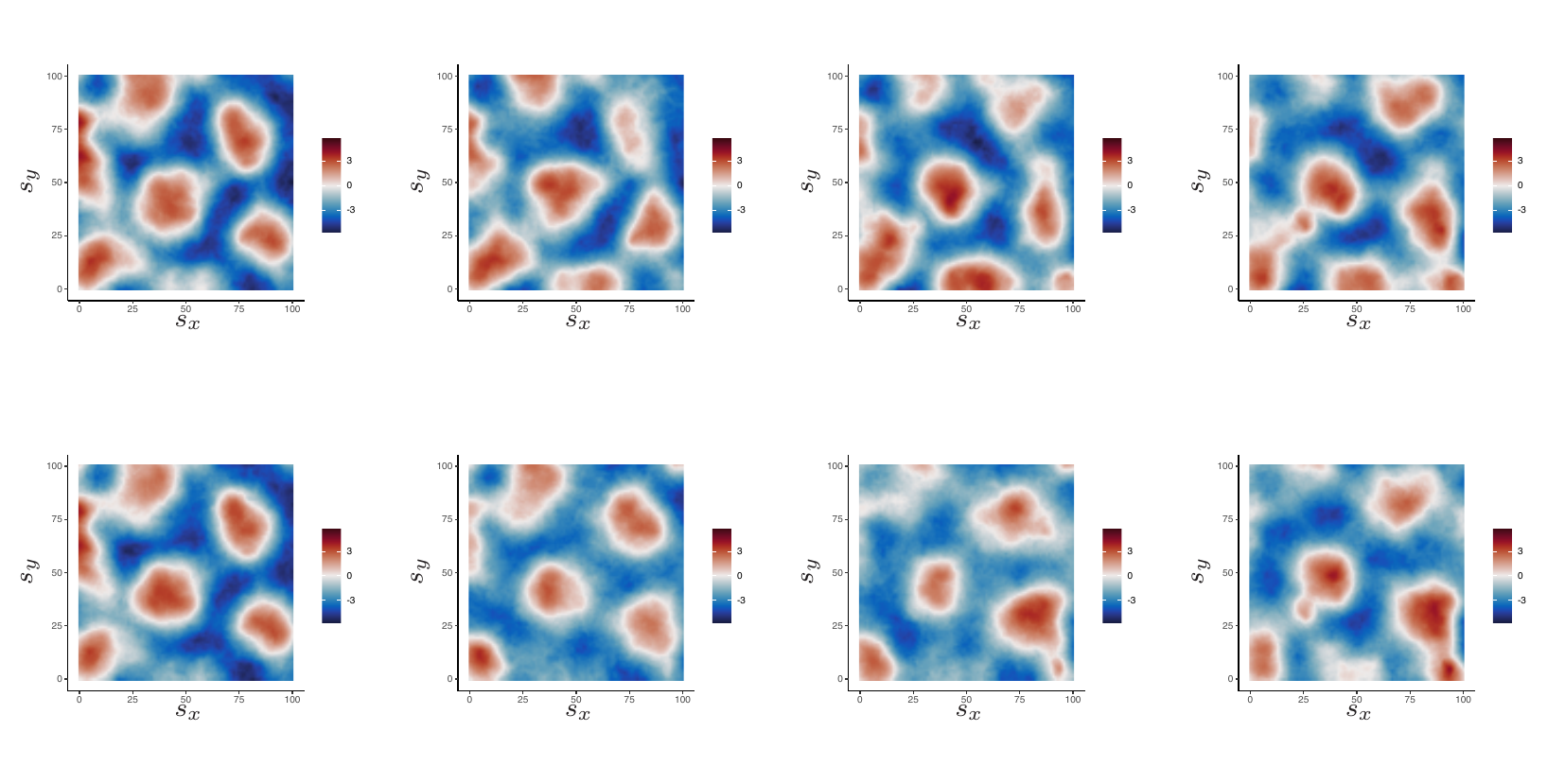}
    \caption{Posterior 0.025 quantile of
      $\xi_{\Omega\times\Theta}((s_x, s_y), \theta)$ from $\MA$, shown in clockwise
      order for a sequence of equally spaced $\theta$ values ranging
      from $\pi/8$ (top left) to $2\pi$ (bottom left).}
    \label{fig:xiomega_CI_MA_low}
    \includegraphics[scale=1]{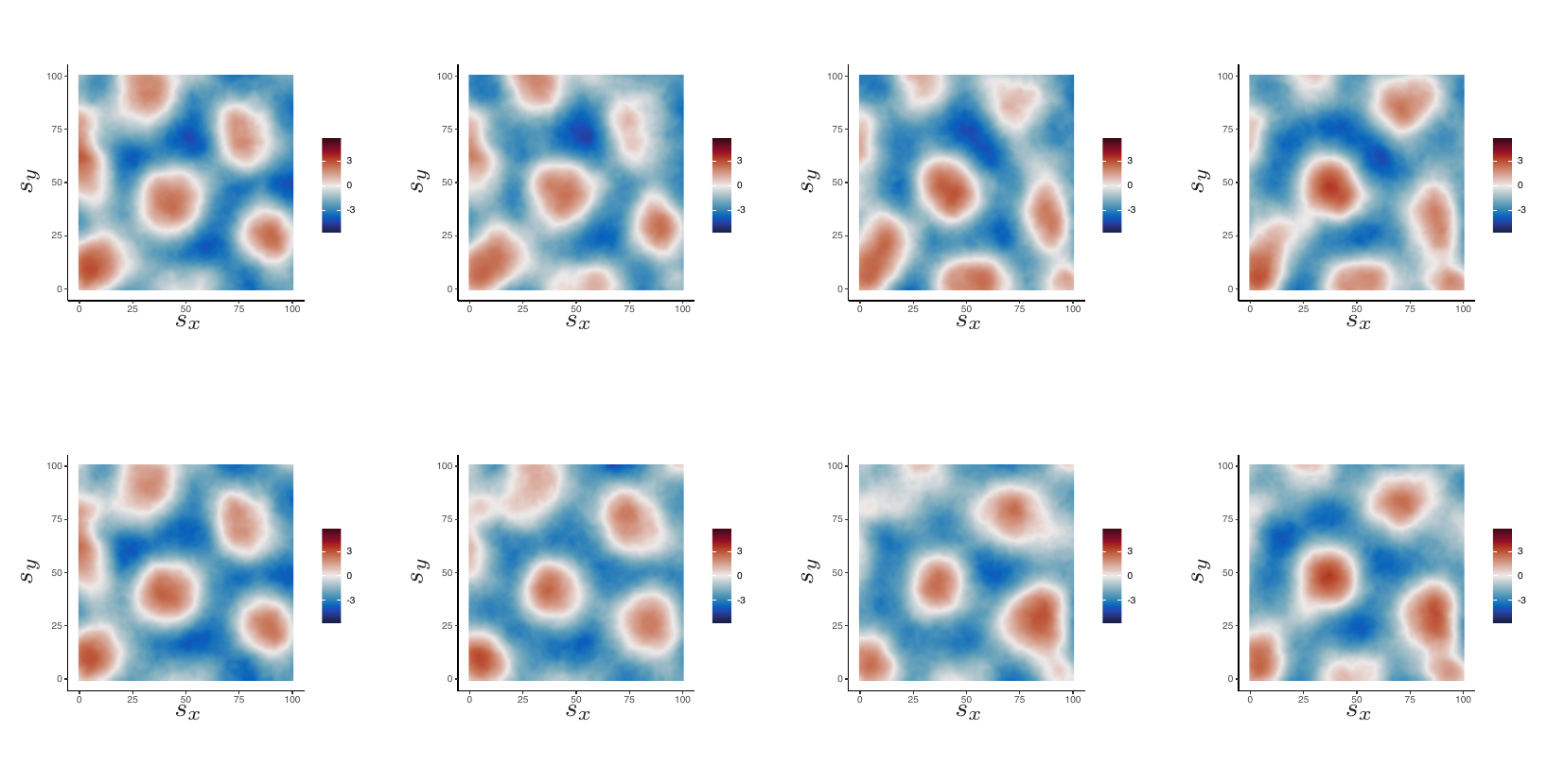}
    \caption{Posterior 0.025 quantile of
      $\xi_{\Omega\times\Theta}((s_x, s_y), \theta)$ from $\MB$, shown in clockwise
      order for a sequence of equally spaced $\theta$ values ranging
      from $\pi/8$ (top left) to $2\pi$ (bottom left).}
    \label{fig:xiomega_CI_MB_low}
  \end{figure}

  \begin{figure}[htpb!]
    \centering
    \includegraphics[scale=1]{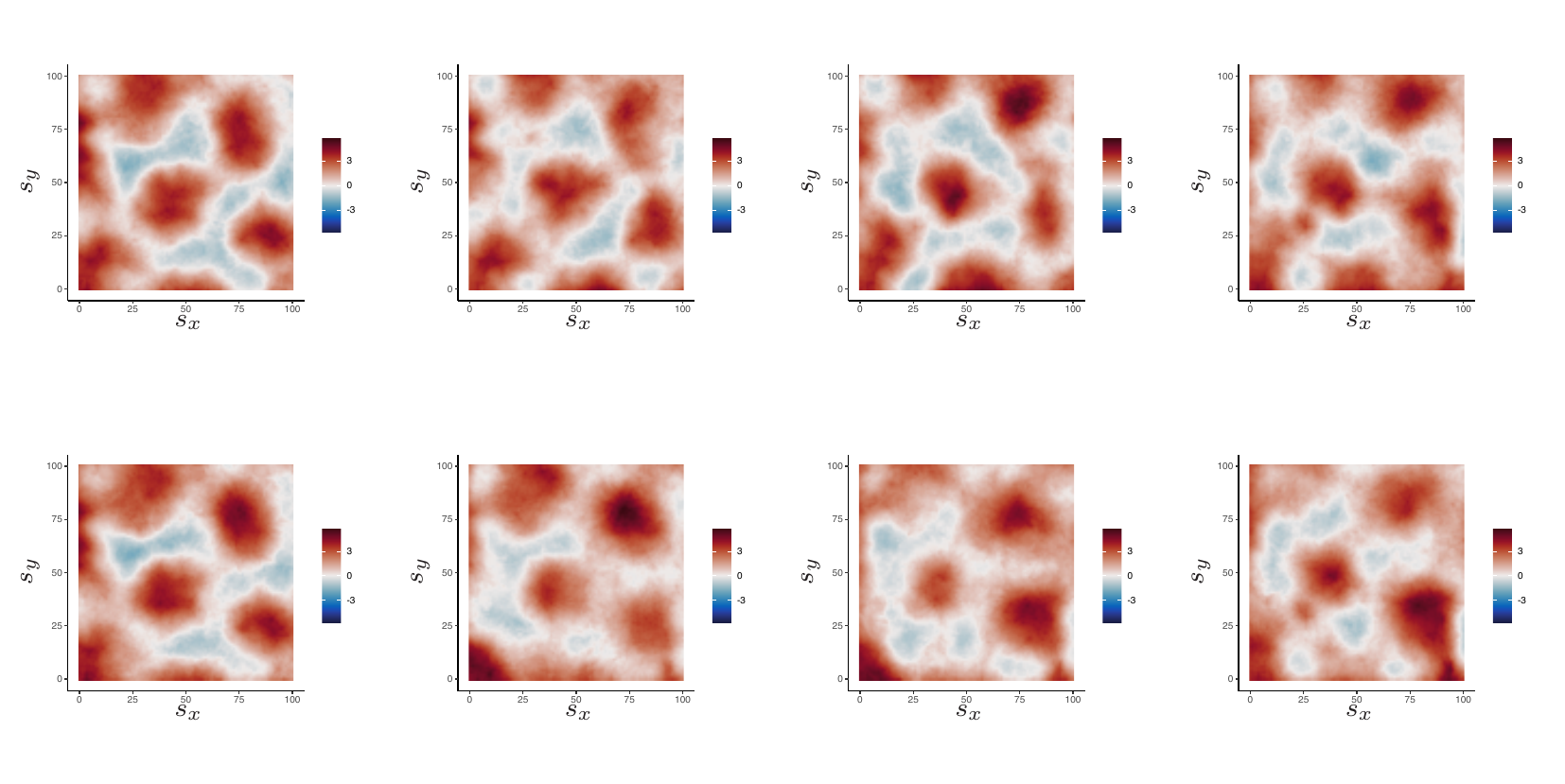}
    \caption{Posterior 0.975 quantile of
      $\xi_{\Omega\times\Theta}((s_x, s_y), \theta)$ from $\MA$, shown in clockwise
      order for a sequence of equally spaced $\theta$ values ranging
      from $\pi/8$ (top left) to $2\pi$ (bottom left).}
        \label{fig:xiomega_CI_MA_upp}
    \includegraphics[scale=1]{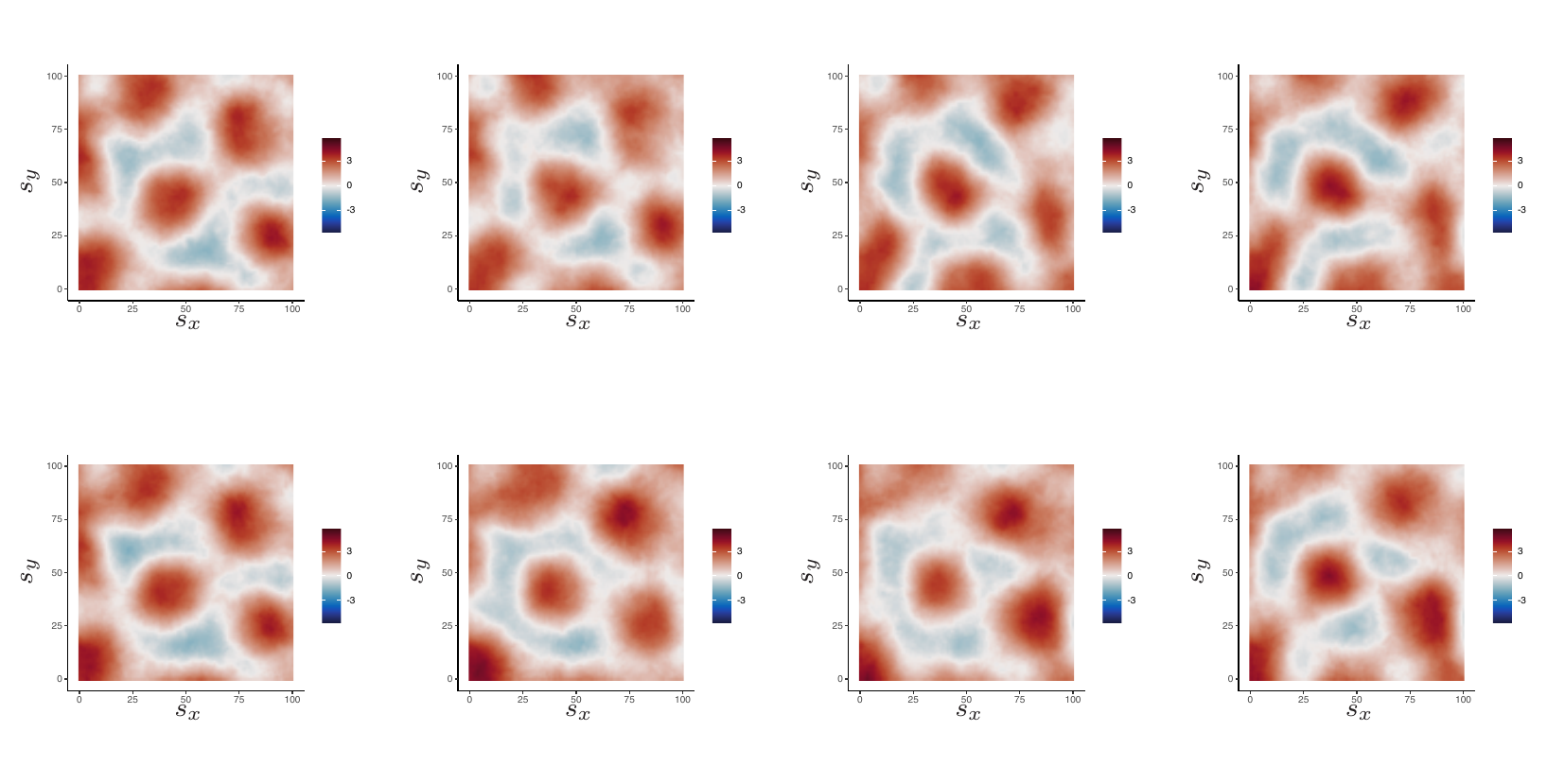}
    \caption{Posterior 0.975 quantile of
      $\xi_{\Omega\times\Theta}((s_x, s_y), \theta)$ from $\MB$, shown
      in clockwise order for a sequence of equally spaced $\theta$
      values ranging from $\pi/8$ (top left) to $2\pi$ (bottom left).}
    \label{fig:xiomega_CI_MB_upp}
  \end{figure}

  \subsection{Permutation test}
  \label{sec:perm-test}
\begin{algorithm}
\SetKwData{Left}{left}\SetKwData{This}{this}\SetKwData{Up}{up}
\SetKwFunction{Union}{Union}\SetKwFunction{FindCompress}{FindCompress}
\SetKwInOut{Input}{input}\SetKwInOut{Output}{output}

\Input{$J\in\NN$ and negatively oriented model scores
  $S(F_i^{\mathcal{M}}, n_i), S(F_i^{\MO}, n_i)$,
  $1 \leq i \leq N_{\text{val}}$.}  \Output{$p$, an estimate of the
  $p$-value of the test.}  \For{$i\leftarrow 1$ \KwTo $N_{\text{val}}$}{
  \emph{compute the score difference}\;
  $S_i^- = S(F_i^{\mathcal{M}}, n_i) - S(F_i^{\MO}, n_i) $ \; }
\emph{compute the observed test statistic
  $T_{obs} = \frac{1}{N_{\text{val}}}\sum_{i=1}^{N_{\text{val}}}S_i^-
  $} \; \For{$j\leftarrow 1$ \KwTo $J$}{ \For{$i\leftarrow 1$ \KwTo
    $N_{\text{val}}$}{ \emph{compute the randomized score
      difference}\;
\begin{equation*}
  S_i^-(j) =
  \begin{cases} S_i^- \,\,\, & \text{with probability $0.5$}
    \\ -S_i^- \,\,\, & \text{with probability $0.5$} \\
  \end{cases}
\end{equation*}
} \emph{compute}
$T_j = \frac{1}{N_{\text{val}}}\sum_{i=1}^{N_{\text{val}}}S_i^-(j) $
\; } \emph{return}
$p = \frac{1}{J}\sum_{j=1}^{J} \mathbbm{1}[T_j \leq T_{obs}].$
\caption{Hypothesis testing procedure for testing pairwise
  exchangeability of model scores.  The computed $p$-value is close to
  zero if model $\mathcal{M}$ is better than model $\MO$, and close to
  $1$ if model $\MO$ is better.\ A two-sided test $p$-value can be
  obtained as $2\min(p,1-p)$.\ The positive integer $N_{\text{val}}$
  corresponds to the number of connected segments on the mouses
  trajectory in the validation set.}\label{TestAlg}
\end{algorithm}

\spacingset{0}

\end{document}